\documentclass[10pt,onecolumn]{amsart}

\usepackage[utf8]{inputenc}

\usepackage{amsmath}
\usepackage{amssymb}  
\usepackage{amsthm}
\usepackage{amsfonts}
\usepackage{graphicx}
\usepackage{cancel}
\usepackage{bbm}
\usepackage{url}
\usepackage{enumerate} 
\usepackage{ upgreek }
\usepackage{dsfont}
\usepackage{color}
\usepackage{subcaption}
\usepackage{comment}

\usepackage{makecell}
\usepackage{diagbox}
\usepackage{wrapfig}
\usepackage{rotating}
\usepackage{tabularx}

\usepackage{hyperref}
\hypersetup{
    colorlinks=true,
    linkcolor=blue,
    citecolor=red,
    filecolor=magenta,      
    urlcolor=black,
}

\urlstyle{same}


%
%

\newcommand{\tr}{\textnormal{tr}}


\newcommand{\ket}[1]{| #1 \rangle}

\newcommand{\bra}[1]{\langle #1 |}

\newcommand{\braket}[2]{\langle #1 | #2 \rangle}

\newcommand{\proj}[2]{| #1 \rangle\!\langle #2 |}



\newcommand{\id}{\ensuremath{\mathds{1}}}





\newcommand{\cH}{\mathcal{H}}












%

%

%


\def\beq{\begin{equation}}
\def\eeq{\end{equation}}
\def\bq{\begin{quote}}
\def\eq{\end{quote}}
\def\ben{\begin{enumerate}}
\def\een{\end{enumerate}}
\def\bit{\begin{itemize}}
\def\eit{\end{itemize}}

\def\ra{\rightarrow}

\def\lb{\left(}
\def\rb{\right)}
\def\lset{\lbrace}
\def\rset{\rbrace}

\def\r|{\right|}
\def\lbr{\left[}
\def\rbr{\right]}
\def\ident{\textnormal{id}}
\def\one{\id}

\newcommand\C{\mathbbm{C}}

\newcommand\N{\mathbbm{N}}

\newcommand\M{\mathcal{M}}




\DeclareMathOperator{\Tr}{Tr}

\theoremstyle{plain}
\newtheorem{thm}{Theorem}
\newtheorem{lem}[thm]{Lemma}

\newtheorem{defn}[thm]{Definition}

\theoremstyle{definition}

\newtheorem{rem}{Remark}[section]

\renewcommand{\leq}{\leqslant}
\renewcommand{\geq}{\geqslant}

\begin{document}

\title{Characterizing Schwarz maps by tracial inequalities}

\author{Eric Carlen$^{1}$}
\address{\small{Department of Mathematics, Hill Center, Rutgers University, 110 Frelinghuysen Road Piscataway NJ 08854-8019 USA}}
\email{carlen@math.rutgers.edu}
\author{Alexander M\"uller-Hermes}
\address{\small{Department of Mathematics, University of Oslo, P.O. box 1053, Blindern, 0316 Oslo, Norway}}
\email{muellerh@math.uio.no}

\begin{abstract}
Let $\phi$ be a linear map from the $n\times n$ matrices $\M_n$ to the $m\times m$ matrices $\M_m$. It is known that $\phi$ is $2$-positive if and only if for all $K\in \M_n$ and all strictly positive $X\in \M_n$, $\phi(K^*X^{-1}K) \geq \phi(K)^*\phi(X)^{-1}\phi(K)$. This inequality is not generally true if $\phi$ is 
merely a Schwarz map. We show that the corresponding tracial inequality $\Tr[\phi(K^*X^{-1}K)] \geq \Tr[\phi(K)^*\phi(X)^{-1}\phi(K)]$ holds for a wider class of positive maps that is specified here. We also comment on the connections of this inequality with various  monotonicity that have found wide use in mathematical physics, and apply it, and a close relative,  to obtain some new, definitive results.
\end{abstract}

\footnotetext[1]{Work partially supported by U.S.
National Science Foundation grant  DMS 2055282.}

\maketitle
\date{\today}

\section{Introduction}

Throughout this paper, $\M_n$ denotes the space of $n\times n$ complex matrices. $\M_n^+$ consists of the positive semi-definite matrices in $\M_n$. We equip $\M_n$ with the Hilbert-Schmidt inner product $\langle A,B\rangle = \Tr[A^*B]$, making it a complex Euclidean space, which we denote by $\cH_n$. The adjoint of a linear map $\phi:\M_n \to \M_m$ with respect to the Hilbert-Schmidt inner product is denoted by $\phi^*$. To study different notions of positivity of linear maps, the following lemma, which is well-known, is useful:

\begin{lem}[Schur complements]\label{lem:Schur}
Let $\cH$ and $\cH'$ denote complex Euclidean spaces. For $X\in B(\cH)^+$, $Y\in B(\cH')^+$ and $K\in B(\cH,\cH')$ the following are equivalent:
\begin{enumerate}
\item The block operator
\[
\begin{pmatrix} X & K^* \\ K & Y\end{pmatrix} \in B(\cH\oplus \cH')
\] 
is positive semidefinite. 
\item We have $\ker(Y)\subseteq \ker(K^*)$ and $X\geq K^*Y^+ K$.
\item We have $\ker(X)\subseteq \ker(K)$ and $Y\geq KX^+K^*$.
\end{enumerate}
Here we denote by $Y^+$ and $X^+$ the Moore-Penrose generalized inverses~\cite{pen55}.
\end{lem}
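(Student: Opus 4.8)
The plan is to prove the equivalence of (1) and (2); the equivalence of (1) and (3) then follows immediately by conjugating the block operator with the swap isometry $\cH \oplus \cH' \to \cH' \oplus \cH$, which interchanges the roles of $X$ and $Y$ and of $K$ and $K^*$ and carries condition (2) into condition (3). Thus everything reduces to the standard Schur-complement computation, carried out carefully with the Moore--Penrose inverse in place of an ordinary inverse.

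For the direction (1) $\Rightarrow$ (2), write $M = \begin{pmatrix} X & K^* \\ K & Y\end{pmatrix}$ and assume $M \geq 0$. The kernel inclusion follows from the elementary fact that a positive semidefinite operator annihilates every vector on which its quadratic form vanishes: for $v \in \ker(Y)$ the vector $(0,v)$ satisfies $\langle (0,v), M(0,v)\rangle = \langle v, Yv\rangle = 0$, hence $M(0,v)=0$, and reading off the first block gives $K^* v = 0$. This yields $\mathrm{ran}(K) \subseteq \mathrm{ran}(Y)$, so that $Ku \in \mathrm{ran}(Y)$ and $YY^+Ku = Ku$ for every $u$. For the inequality I would test $M$ against the vector $(u, -Y^+Ku)$; using $YY^+Ku = Ku$, the quadratic form collapses to exactly $\langle u, (X - K^*Y^+K)u\rangle$, and positivity of $M$ forces this to be nonnegative for all $u$, i.e.\ $X \geq K^*Y^+K$.

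For the converse (2) $\Rightarrow$ (1), I would run the completion-of-squares identity underlying the block congruence of $M$ by $\begin{pmatrix} I & 0 \\ Y^+K & I\end{pmatrix}$. Given an arbitrary $(u,z)$, first discard the component of $z$ in $\ker(Y)$: since $Ku \in \mathrm{ran}(Y)$, that component contributes to neither $\langle z, Yz\rangle$ nor $\langle z, Ku\rangle$, so one may assume $z \in \mathrm{ran}(Y)$. Completing the square in $z$ then gives
\[
\langle (u,z), M(u,z)\rangle = \langle u, (X - K^*Y^+K)u\rangle + \langle z + Y^+Ku,\, Y(z + Y^+Ku)\rangle,
\]
where once more $YY^+Ku = Ku$ is what makes the cross terms match. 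Both terms on the right are nonnegative --- the first by the hypothesis $X \geq K^*Y^+K$, the second because $Y \geq 0$ --- whence $M \geq 0$.

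The only genuine subtlety, and the step I would be most careful about, is the bookkeeping with the pseudoinverse: every place where one wishes to treat $Y^+$ as a true inverse must be justified by the identity $YY^+ = P_{\mathrm{ran}(Y)}$ together with the inclusion $\mathrm{ran}(K)\subseteq \mathrm{ran}(Y)$ extracted from the kernel condition. Once that inclusion is in hand, the singular case behaves exactly as the invertible one and the completion of squares goes through verbatim.
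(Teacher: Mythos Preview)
Your proof is correct and is the standard Schur-complement argument, handled carefully with the Moore--Penrose inverse via the identity $YY^+ = P_{\mathrm{ran}(Y)}$ and the range inclusion $\mathrm{ran}(K)\subseteq\mathrm{ran}(Y)$. Note, however, that the paper does not actually supply a proof of this lemma: it is introduced as ``well-known'' and simply stated, so there is no proof in the paper to compare yours against.
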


Using Schur complements, it is easy to characterize when a linear map $\phi:\M_n\ra \M_m$ is $2$-positive, i.e., when $\ident_2\otimes \phi$ is a positive map: This is the case if and only if the operator-inequality 
\begin{equation}\label{LRC2}
\phi(K^*X^{+}K) \geq \phi(K)^*\phi(X)^{+}\phi(K)\ ,
\end{equation}
holds for each $X\in \M^+_n$ and $K\in \M_n$ such that $\ker(X)\subseteq \ker(K^*)$. This characterization of $2$-positive maps was first observed by Choi~\cite[Proposition 4.1]{Choi80} (formally under the additional assumption that $\phi(\one_n) >0$) and the inequality \eqref{LRC2} had been proved earlier by Lieb and Ruskai \cite{LR74} under the stronger assumption that $\phi$ is completely positive.

When $\phi$ is unital; i.e., $\phi(\one_n) = \one_m$,  and $2$-positive, 
 taking $X = \one_{n}$, \eqref{LRC2} becomes the {\em Schwarz inequality}
\begin{equation}\label{LRC2b}
\phi(K^*K) \geq \phi(K)^*\phi(K)\ ,
\end{equation}
valid under these conditions on $\phi$ for every $K\in \M_n$. In Appendix $A$ of \cite{Choi80}, Choi raised the question as to whether all unital maps $\phi$ satisfying \eqref{LRC2b} for all $K$ are $2$-positive, and then he answered this negatively by providing a specific counterexample on $\M_2$.  One may then ask: For which positive maps $\phi:\M_{n}\ra \M_{m}$ is the tracial inequality
\begin{equation}\label{LRC3}
\Tr[\phi^*(K^*X^{+}K)] \geq \Tr[\phi^*(K)^*\phi^*(X)^{+}\phi^*(K)]\ 
\end{equation}
valid for all $K\in \M_m$, $X\in \M_m^+$ with  $\ker(X)\subseteq \ker(K^*)$?
It is evidently valid whenever \eqref{LRC2} is valid for the adjoint $\phi^*$ instead of $\phi$, and since adjoints of $2$-positive maps are $2$-positive 
as well, \eqref{LRC3} is therefore valid whenever $\phi$ is $2$-positive. It is natural to expect that it is true for a wider class of maps. 
This is the case, but before proceeding to prove this, we specify some classes of positive maps with which we work.

\subsection*{Schwarz maps}

The term {\em Schwarz map} is sometimes used to denote any linear map $\phi$ between $C^*$-algebras such that the Schwarz inequality \eqref{LRC2b} is valid for all $K$ in the domain; see e.g. Petz \cite[p. 62]{P86}. Other authors, e.g., Siudzi{\'n}ska et al. \cite[p. 6]{siu21}, consider \eqref{LRC2b} with an additional factor $\|\phi\lb\one_n\rb\|_\infty$ on the left-hand side, or restrict the term Schwarz map to unital maps satisfying \eqref{LRC2b} for all $K$ in the domain, see e.g., Wolf \cite[Chapter 4]{W12}. For clarity, we use the terminology {\em  Schwarz map} to refer to {\em unital} linear maps satisfying \eqref{LRC2b}, and we define a broader class of maps as follows:

\begin{defn}[Generalized Schwarz maps]\label{GSdef}
A linear map $\phi:\M_{n}\ra\M_{m}$ is called a \emph{generalized Schwarz map} if 
\[
\begin{pmatrix} \phi(\one_{n}) & \phi(K) \\ \phi(K)^* & \phi\lb K^*K\rb\end{pmatrix} \geq 0
\]
for all $K\in \M_{n}$. 
\end{defn}

It is obvious that the set of generalized Schwarz maps from $\M_n$ to $\M_m$ is a closed convex cone. We shall show here that this closed convex cone coincides with the closed convex cone of maps that satisfy the tracial inequality \eqref{LRC3} for all $X,K\in \M_n$, $X>0$.  

Using Lemma \ref{lem:Schur}, a linear map $\phi:\M_{n}\ra\M_{m}$ is a generalized Schwarz map if and only if the inequality
\begin{equation}\label{LRC2c}
\phi(K^*K) \geq \phi(K)^*\phi(\one_n)^+\phi(K)\ ,
\end{equation}
holds for every $K\in \M_n$. For some $c>0$, $K^*K \leq c\one_n$, and then by the positivity of $\phi$, 
$\phi(K^*K) \leq c\phi(\one_n)$. In particular, $\ker(\phi(\one_n)) \subseteq \ker(\phi(K^*K))$ . Thus \eqref{LRC2c} is equivalent to
\begin{equation}\label{LRC2cc}
(\phi(\one_n)^+)^{1/2}\phi(K^*K)(\phi(\one_n)^+)^{1/2} \geq (\phi(\one_n)^+)^{1/2}\phi(K)^*\phi(\one_n)^+\phi(K)(\phi(\one_n)^+)^{1/2}\ ,
\end{equation}
and if we introduce the positive map $\psi:\M_n\to \M_m$ given by
\begin{equation}\label{LRC2d}
\psi(K):=(\phi(\one_n)^+)^{1/2}\phi(K)(\phi(\one_n)^+)^{1/2}\ ,
\end{equation}
we can rewrite \eqref{LRC2cc} as
\begin{equation}\label{LRC2ccc}
\psi(K^*K) \geq \psi(K)^* \psi(K)\ .
\end{equation}
Therefore,  $\phi$ is a generalized Schwarz map if and only if $\psi$ satisfies the Schwarz inequality. When $\phi$ is unital, we have that $\phi=\psi$ is a generalized Schwarz map if and only if it is a Schwarz map. 

Our first main result is:

\begin{thm}\label{mainH}
Let $\phi:\M_{n}\ra\M_{m}$ denote a positive map. Then $\phi$ is a generalized Schwarz map if and only if for any $(K,X) \in \M_m\times \M_m^+$ such that  $\ker\lb X\rb\subseteq \ker\lb K^*\rb$, we have
\begin{equation}\label{tracialineq}
\Tr[\phi^*(K^*X^{+}K)] \geq \Tr[\phi^*(K)^*\phi^*(X)^{+}\phi^*(K)]\ .
\end{equation}
\end{thm}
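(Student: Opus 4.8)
The plan is to build a single algebraic bridge between the block-matrix condition of Definition~\ref{GSdef} and the tracial inequality \eqref{tracialineq}, and then run it in both directions. Write $A:=\phi^*(K)$ and $P:=\phi^*(X)\in\M_m^+$ mapped into $\M_n$, so that $A,P\in\M_n$. The engine is the variational identity: for $P\geq 0$, and \emph{unconditionally at the point} $Y_\ast:=P^+A$, the quadratic $Y\mapsto \Tr[A^*Y+Y^*A-Y^*PY]$ takes the value $\Tr[A^*P^+A]$ (using only $P^+PP^+=P^+$), while if $\mathrm{range}(A)\subseteq\mathrm{range}(P)$ this is its global maximum. Now define, for $Y\in\M_n$ and $(K,X)$ with $\ker(X)\subseteq\ker(K^*)$, the block operators in $\M_{2m}$
\[
\cG(Y^*):=\begin{pmatrix}\phi(\one_n) & \phi(Y^*)\\ \phi(Y) & \phi(YY^*)\end{pmatrix},\qquad
\cT(K,X):=\begin{pmatrix}K^*X^{+}K & -K^*\\ -K & X\end{pmatrix}.
\]
Here $\cG(Y^*)$ is exactly the generalized Schwarz block of Definition~\ref{GSdef} for the matrix $Y^*$ (note $\phi(Y^*)^*=\phi(Y)$ since positive maps preserve adjoints), and $\cT(K,X)\geq 0$ by Lemma~\ref{lem:Schur}, because its Schur complement of $X$ is $K^*X^+K-(-K^*)X^+(-K)=0$. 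Using the adjoint identity $\Tr[\phi^*(B)W]=\Tr[B\,\phi(W)]$ throughout, a direct computation gives the key identity, valid for all $Y$:
\[
\Tr\!\big[\cG(Y^*)\,\cT(K,X)\big]=\Tr[\phi^*(K^*X^{+}K)]-\Tr[A^*Y+Y^*A-Y^*PY].
\]

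For the forward direction, assume $\phi$ is a generalized Schwarz map. Then $\cG(Y^*)\geq 0$ for every $Y$, and since $\cT(K,X)\geq 0$ we get $\Tr[\cG(Y^*)\cT(K,X)]\geq 0$ from the elementary fact that $\Tr[MN]\geq 0$ whenever $M,N\geq 0$. By the key identity this reads $\Tr[\phi^*(K^*X^+K)]\geq \Tr[A^*Y+Y^*A-Y^*PY]$ for every $Y$; specializing to $Y=Y_\ast=\phi^*(X)^+\phi^*(K)$, the right-hand side equals $\Tr[\phi^*(K)^*\phi^*(X)^+\phi^*(K)]$, which is precisely \eqref{tracialineq}. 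This argument is unconditional in the kernel structure, so it covers all $(K,X)$ with $\ker(X)\subseteq\ker(K^*)$ with no separate limiting argument.

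For the converse, I would show that the test matrices $\cT(K,X)$ are rich enough to detect positivity of $\cG$. Writing $L:=-K^*X^{-1}$ for $X>0$ gives the factorization
\[
\cT(K,X)=\begin{pmatrix}L\\ \one_m\end{pmatrix}X\begin{pmatrix}L^* & \one_m\end{pmatrix}.
\]
As $L$ ranges over $\M_m$ and $X$ over strictly positive matrices, and letting $X$ approach a rank-one matrix $vv^*$ with $L$ chosen so that $Lv=a$, these produce in the limit every rank-one positive operator $\begin{pmatrix}a\\ v\end{pmatrix}\begin{pmatrix}a^* & v^*\end{pmatrix}$ (the case of a vanishing lower block follows by a further limit). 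Hence the closed convex cone generated by $\{\cT(K,X):K\in\M_m,\ X>0\}$ is all of $\M_{2m}^+$. Now suppose \eqref{tracialineq} holds for every $(K,X)$ with $X>0$. For $X>0$ one checks that positivity forces $\ker\phi^*(\one_m)\subseteq\ker\phi^*(K^*)$ (if $\phi^*(\one_m)v=0$ then $0\leq v^*\phi^*(c\one_m\pm K^{*}\pm K)v$ pins down $\phi^*(K^*)v=0$), and since $\ker\phi^*(X)\subseteq\ker\phi^*(\one_m)$ we obtain $\mathrm{range}(A)\subseteq\mathrm{range}(P)$; thus $\max_Y\Tr[A^*Y+Y^*A-Y^*PY]=\Tr[A^*P^+A]$. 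Combining this with \eqref{tracialineq} and the key identity shows $\Tr[\cG(Y^*)\cT(K,X)]\geq 0$ for all $Y$ and all such $(K,X)$. Fixing $Z\in\M_n$ and taking $Y=Z^*$ yields $\Tr[\cG(Z)\,\cT(K,X)]\geq 0$ for every $(K,X)$ with $X>0$; since these $\cT(K,X)$ generate $\M_{2m}^+$ and $\cG(Z)$ is Hermitian, self-duality of the positive cone forces $\cG(Z)\geq 0$. As $Z$ was arbitrary, $\phi$ is a generalized Schwarz map.

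The main obstacle I anticipate is the converse, and specifically the cone-generation claim: one must verify carefully that the very special operators $\cT(K,X)$, all of which have \emph{vanishing} Schur complement and hence are ``extremal'' among positive block operators with invertible lower-right block, nevertheless generate the full cone $\M_{2m}^+$ after taking closures and degenerate limits. The other delicate point is the adjoint bookkeeping in the key identity: one must confirm that \emph{both} off-diagonal terms $\phi(Y)$ and $\phi(Y^*)$ emerge from the two linear terms $A^*Y$ and $Y^*A$ and match the two off-diagonal entries of $\cG(Y^*)$, since a single transpose error here would destroy the correspondence.
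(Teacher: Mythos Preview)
Your forward direction is essentially the paper's first proof: both compute the Hilbert--Schmidt pairing of the generalized Schwarz block with the Schur-complement block, move the adjoint across, and then specialize to $Y=\phi^*(X)^+\phi^*(K)$. Your formulation with a free parameter $Y$ before specializing is a slightly cleaner packaging of exactly the same computation.

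Your converse, however, takes a genuinely different route. The paper does not argue by cone generation. Instead it assumes $\phi$ is not a generalized Schwarz map, picks an eigenvector $(u,v)$ of a negative eigenvalue of the offending block, sets $X=|v\rangle\langle v|$ and $K^*=|u\rangle\langle v|$, and then invokes the Legendre-transform representation of Lemma~\ref{LTLS} together with the kernel-propagation Lemma~\ref{lem:kernelIncl} to force a failure of the tracial inequality. (Its second proof reduces to Theorem~\ref{main} by renormalizing $\phi$ to a unital map.) Your approach bypasses both auxiliary lemmas: once $\Tr[\cG(Z)\,\cT(K,X)]\geq 0$ holds for all $(K,X)$ with $X>0$, self-duality of $\M_{2m}^+$ gives $\cG(Z)\geq 0$ directly, provided the $\cT(K,X)$ generate the cone. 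That generation claim is correct and your sketch is sound; for the ``$v=0$'' limits you worried about, taking $X=\epsilon\one_m$ and $K=\sqrt{\epsilon}\,B$ gives $\cT(K,X)\to\begin{pmatrix}B^*B & 0\\ 0 & 0\end{pmatrix}$, which together with the rank-one limits you describe covers all rank-one positives.

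The one genuine soft spot is your parenthetical for $\ker\phi^*(\one_m)\subseteq\ker\phi^*(K^*)$: the inequality $0\leq v^*\phi^*(c\one_m\pm(K+K^*))v$ only yields $v^*\phi^*(K^*)v=0$, not $\phi^*(K^*)v=0$. You need one more step: since $M:=c\phi^*(\one_m)-\phi^*(K+K^*)\geq 0$ and $\langle v,Mv\rangle=0$, positivity forces $Mv=0$, hence $\phi^*(K+K^*)v=c\phi^*(\one_m)v=0$; repeating with $i(K-K^*)$ gives $\phi^*(K^*)v=0$. With that patch your converse is complete, and the resulting argument is more self-contained than the paper's, at the cost of the cone-density verification.
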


There is another tracial inequality closely related to \eqref{LRC3}. When $\phi$ is unital, so that $\phi^*$ is trace preserving, \eqref{LRC3} reduces to
\begin{equation}\label{LRC5}
\Tr[K^*X^{+}K] \geq \Tr[\phi^*(K)^*\phi^*(X)^{+}\phi^*(K)]\ .
\end{equation}
Therefore, \eqref{LRC5} is valid at least whenever $\phi$ is $2$-positive and unital. Again, one may ask for the class of 
positive maps for which \eqref{LRC5}  is valid for all $K\in \M_m$,  $X\in  \M_m^+$ with $\ker(X) \subseteq\ker(K^*)$. Note that  the inequality 
\eqref{LRC5}, like the Schwarz inequality, is not homogenous. 

Our second main result is:

\begin{thm}\label{main}  A positive map $\phi:\M_{n}\ra\M_{m}$ satisfies
\begin{equation}\label{tracialineqB}
\Tr[K^*X^{+}K] \geq \Tr[\phi^*(K)^*\phi^*(X)^{+}\phi^*(K)]\ ,
\end{equation}
for all $(K,X) \in \M_m\times \M_m^+$ with $\ker(X) \subseteq \ker(K^*)$, if and only if the map $\phi$ satisfies the Schwarz inequality \eqref{LRC2b}.
\end{thm}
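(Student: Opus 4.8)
The plan is to use, as the single engine for both directions, the variational representation of the quadratic-over-linear expression $(A,B)\mapsto\Tr[A^*B^+A]$. Expanding $\Tr[(Z-B^+A)^*B(Z-B^+A)]\geq 0$ for $B\in\M_k^+$ and $A,Z\in\M_k$, and writing $P=BB^+$ for the orthogonal projection onto $\mathrm{range}(B)$, gives
\[
\Tr[A^*B^+A]\geq 2\,\mathrm{Re}\,\Tr[A^*Z]-\Tr[Z^*BZ],
\]
valid whenever $\mathrm{range}(A)\subseteq\mathrm{range}(B)$ or $\mathrm{range}(Z)\subseteq\mathrm{range}(B)$, with equality at $Z=B^+A$. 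I will combine this with the two adjoint identities $\Tr[K^*\phi(Z)]=\Tr[\phi^*(K)^*Z]$ and $\Tr[X\phi(ZZ^*)]=\Tr[Z^*\phi^*(X)Z]$ (which follow from $\langle\phi^*(A),B\rangle=\langle A,\phi(B)\rangle$ and the self-adjointness of $\phi^*(X)$), and with the observation that, since positive maps preserve adjoints, the Schwarz inequality \eqref{LRC2b} for $\phi$ is equivalent to $\phi(ZZ^*)\geq\phi(Z)\phi(Z)^*$ for all $Z\in\M_n$.

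For the \emph{forward} direction, assume $\phi$ satisfies \eqref{LRC2b}. Fix a valid pair $(K,X)$, so that $\mathrm{range}(K)\subseteq\mathrm{range}(X)$, and set $Z_0:=\phi^*(X)^+\phi^*(K)\in\M_n$. Applying the variational inequality with $A=K$, $B=X$, $Z=\phi(Z_0)$ gives
\[
\Tr[K^*X^+K]\geq 2\,\mathrm{Re}\,\Tr[K^*\phi(Z_0)]-\Tr[\phi(Z_0)^*X\phi(Z_0)].
\]
Since $\Tr[\phi(Z_0)^*X\phi(Z_0)]=\Tr[X\phi(Z_0)\phi(Z_0)^*]\leq\Tr[X\phi(Z_0Z_0^*)]$ by the Schwarz inequality and $X\geq0$, the right-hand side is at least $2\,\mathrm{Re}\,\Tr[K^*\phi(Z_0)]-\Tr[X\phi(Z_0Z_0^*)]$. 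Rewriting this through the adjoint identities and using the Moore--Penrose relation $\phi^*(X)^+\phi^*(X)\phi^*(X)^+=\phi^*(X)^+$ shows that it equals $\Tr[\phi^*(K)^*\phi^*(X)^+\phi^*(K)]$, which is exactly \eqref{tracialineqB}.

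For the \emph{converse}, assume \eqref{tracialineqB} holds for all valid pairs. Fix $X\in\M_m^+$ with $X>0$ (so every $K\in\M_m$ is admissible) and a matrix $Z\in\M_n$ with $\mathrm{range}(Z)\subseteq\mathrm{range}(\phi^*(X))$. The variational inequality, now applied with $A=\phi^*(K)$, $B=\phi^*(X)$ and this $Z$, lower-bounds the right-hand side of \eqref{tracialineqB}; combining it with \eqref{tracialineqB} and the adjoint identities yields, for every $K\in\M_m$,
\[
\Tr[K^*X^{-1}K]\geq 2\,\mathrm{Re}\,\Tr[K^*\phi(Z)]-\Tr[X\phi(ZZ^*)].
\]
Minimizing $\Tr[K^*X^{-1}K]-2\,\mathrm{Re}\,\Tr[K^*\phi(Z)]$ over $K$ (the minimizer is $K=X\phi(Z)$) collapses this to the tracial Schwarz inequality $\Tr[X\phi(ZZ^*)]\geq\Tr[X\phi(Z)\phi(Z)^*]$. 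Letting $X\to|w\rangle\langle w|$ through positive definite matrices, for arbitrary $w\in\C^m$, then upgrades this to the operator inequality $\phi(ZZ^*)\geq\phi(Z)\phi(Z)^*$, i.e.\ \eqref{LRC2b}.

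The step I expect to be the main obstacle is making the converse rigorous at the level of supports and generalized inverses: the variational lower bound for the right-hand side of \eqref{tracialineqB} is only available for $Z$ with $\mathrm{range}(Z)\subseteq\mathrm{range}(\phi^*(X))$, and the passage $X\to|w\rangle\langle w|$ must respect this constraint. When $\phi^*(\one_m)$ is invertible the constraint is vacuous for $X$ near $\one_m$ and one concludes directly; in general one must treat the subspace $V_0:=\{v:\phi(|v\rangle\langle v|)=0\}$ separately, using the elementary fact that for a positive map $\phi(|v\rangle\langle v|)=0$ forces $\phi(|v\rangle\langle d|)=0$ for all $d$, so that both $\phi(Z)$ and $\phi(ZZ^*)$ depend only on the component of $Z$ with range in $V_0^\perp=\mathrm{range}(\phi^*(\one_m))$, on which the argument applies.
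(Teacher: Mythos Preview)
Your proof is correct, and the core engine---the variational identity
\[
\Tr[A^*B^+A]=\sup_{Z}\bigl\{2\,\mathrm{Re}\,\Tr[A^*Z]-\Tr[Z^*BZ]\bigr\}
\]
---is exactly the content of the paper's Lemma~\ref{LTLS} after one optimizes out the variable $Y$ in their set $\Omega$. Where the two arguments differ is in packaging and in the treatment of degeneracies. The paper recognizes $F(K,X)=\Tr[K^*X^+K]$ and the indicator $G$ of $\Omega=\{(L,Y):Y\le -LL^*\}$ as a Legendre-transform pair, observes that $G(\phi(L),\phi(Y))\le G(L,Y)$ is literally the Schwarz inequality, and then invokes the standard duality principle (Fenchel--Moreau) to conclude that monotonicity of $F$ under $\phi^*$ is equivalent to monotonicity of $G$ under $\phi$. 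Your proof carries out both halves of this duality by hand: in the forward direction you evaluate the supremum at the optimizer $Z_0=\phi^*(X)^+\phi^*(K)$ and push it through $\phi$, and in the converse you optimize over $K$ to collapse the tracial inequality to the Schwarz inequality---these are precisely the two one-line computations hidden inside the paper's duality step.

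The second genuine difference is how you handle the case $\phi^*(\one_m)$ singular. The paper extends $F$ to take the value $+\infty$ off the admissible set, proves lower semicontinuity (Lemma~\ref{LTLS}), and uses this to force the kernel inclusion $\ker(\phi^*(X))\subseteq\ker(\phi^*(K)^*)$ (Lemma~\ref{lem:kernelIncl}); the Legendre-transform argument then goes through unrestricted. You instead restrict the variational lower bound to $Z$ with $\mathrm{range}(Z)\subseteq\mathrm{range}(\phi^*(\one_m))$, obtain the operator Schwarz inequality on that class, and then use the elementary observation that $\phi(|v\rangle\langle v|)=0$ forces $\phi(|v\rangle\langle d|)=0$ to show that $\phi(Z)$ and $\phi(ZZ^*)$ are unchanged upon projecting $Z$ onto $V_0^\perp=\mathrm{range}(\phi^*(\one_m))$. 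Both routes are valid; yours is more elementary (no Fenchel--Moreau, no extended-valued functions), while the paper's is more conceptual and makes the symmetry between the two tracial inequalities and the two operator inequalities transparent.
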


In  section 2 we prove a duality lemma that is used in the proof of both Theorem~\ref{mainH} and Theorem~\ref{main}, together with Schur complement arguments based on Lemma~\ref{lem:Schur}.   In section 3 we prove  Theorem~\ref{mainH} and Theorem~\ref{main}.
One motivation for studying the relationship between the Schwarz inequality \eqref{LRC2b} and the tracial inequalities  \eqref{tracialineqB} (or in this application
 \eqref{tracialineq}) is that these are the only 
two inequalities used in a method due to Hiai and Petz \cite{HP12} for proving a wide class of monotonicity theorems that have been of great 
interest in mathematical physics. This is discussed in Section 4.  In an appendix we prove a theorem that gives many examples of generalized Schwarz maps that are not $2$-positive.

\subsection{Acknowledgement} We are deeply grateful to an anonymous referee who suggested  a version of Theorem~\ref{main} that led us to greatly strengthen our results.

\section{Duality and positivity}

Note that the set $\{ (K,X) \in \M_m\times \M_m^+\ : \  \ker(X)\subseteq \ker(K^*)\ \}$ is convex since for any $0  < \lambda < 1$ and 
$(K_j,X_j)$, $j=1,2$ belonging to this set,
\begin{multline*}
\ker((1-\lambda) X_1 + \lambda X_2) = \ker(X_1)\cap \ker(X_2) \subseteq  \\ \ker(K_1)\cap \ker(K_2) \subseteq  \ker((1-\lambda) K_1 + \lambda K_2)\ .
\end{multline*}
In fact, more is true:

\begin{lem}\label{LTLS}  Define $F: \M_n\times \M_n^+ \to [0,\infty] $  and $\Omega \subset  \M_n\times \M_n^+$ by
\begin{equation}\label{Fdef}
 F(K,X) := \begin{cases} \Tr[K^*X^+K] & \ker(X) \subseteq \ker(K^*)\\ \phantom{TT}\infty & {\rm otherwise} \end{cases}
 \end{equation}
 and 
 \begin{equation}\label{Odef}
 \Omega := \left\{ (L,Y)  \ :\  \left(\begin{array}{cc}  Y & L \\ L^* & -I\end{array}\right) \leq 0\ \right\}\ .
 \end{equation}
 Then
 \begin{equation}\label{LT}
 F(K,X) = \sup\{ \Tr[XY] + \Tr[K^*L] + \Tr[KL^*] \ :\  (L,Y) \in \Omega \}\ .
 \end{equation}
 In particular $F$ is jointly convex and lower semicontinuous. 
\end{lem}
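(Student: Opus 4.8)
The plan is to evaluate the supremum on the right-hand side of \eqref{LT} explicitly, eliminating the two optimization variables one at a time, and then to read off joint convexity and lower semicontinuity directly from the variational formula. First I would rewrite the constraint defining $\Omega$ using Lemma~\ref{lem:Schur}. The condition in \eqref{Odef} says the block operator is negative semidefinite, so its negative is positive semidefinite; since in that negated block the lower-right block is the strictly positive $I$, the Schur complement condition (2) of Lemma~\ref{lem:Schur} applies and reduces to $Y=Y^*$ together with $-Y\geq LL^*$. Thus $\Omega=\{(L,Y): Y=Y^*,\ Y\leq -LL^*\}$ with $L$ ranging freely over $\M_n$, and in particular $Y=-LL^*$ is always feasible.

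Next I would carry out the inner maximization over $Y$ for fixed $L$. Writing a feasible $Y=-LL^*-P$ with $P\geq 0$ and using $X\geq 0$ gives $\Tr[XY]=-\Tr[L^*XL]-\Tr[XP]\leq -\Tr[L^*XL]$, with equality at $P=0$. Since $\Tr[K^*L]+\Tr[KL^*]=2\operatorname{Re}\Tr[K^*L]$, the supremum in \eqref{LT} collapses to $\sup_{L\in\M_n}g(L)$ with $g(L):=-\Tr[L^*XL]+2\operatorname{Re}\Tr[K^*L]$, a concave quadratic functional of $L$; it then remains to show that this supremum equals $F(K,X)$ in both cases of the definition \eqref{Fdef}.

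This case analysis is the heart of the argument. When $\ker(X)\subseteq\ker(K^*)$, the columns of $K$ lie in the range of $X$, equivalently $XX^+K=K$; completing the square with the Moore--Penrose inverse then yields the identity $g(L)=\Tr[K^*X^+K]-\Tr[(L-X^+K)^*X(L-X^+K)]$, whose second term is $\leq 0$ and vanishes at $L=X^+K$, so $\sup_L g(L)=\Tr[K^*X^+K]=F(K,X)$. When $\ker(X)\not\subseteq\ker(K^*)$, I would produce a divergent direction: viewing $\cX:L\mapsto XL$ as a self-adjoint positive operator on $(\M_n,\langle\cdot,\cdot\rangle)$, the hypothesis is exactly that $K$ fails to lie in the range of $\cX$, hence has a nonzero component $K_\perp\in\ker(\cX)$, and along $L=tK_\perp$ one has $\Tr[L^*XL]=0$ while $2\operatorname{Re}\Tr[K^*L]=2t\|K_\perp\|^2\to\infty$, so $\sup_L g(L)=\infty=F(K,X)$. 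Checking that the range condition $K\in\mathrm{ran}(\cX)$ coincides with $\ker(X)\subseteq\ker(K^*)$, and manipulating the pseudoinverse relations $XX^+=X^+X=$ (orthogonal projection onto $\ker(X)^\perp$) cleanly, is the main technical point.

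Finally, joint convexity and lower semicontinuity follow immediately from \eqref{LT} once the identity is established: for each fixed $(L,Y)\in\Omega$ the map $(K,X)\mapsto \Tr[XY]+\Tr[K^*L]+\Tr[KL^*]$ is real-affine and continuous, and a pointwise supremum of such functions is convex and lower semicontinuous. I expect the only genuinely delicate step to be the case analysis above --- specifically, getting the kernel/range bookkeeping right so that the completed square is valid \emph{exactly} on the set where $F$ is finite, while exhibiting the escaping direction on its complement.
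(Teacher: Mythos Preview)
Your argument is correct. Both you and the paper arrive at the same optimizer $L=X^{+}K$, $Y=-LL^{*}$ and the same conclusion in the degenerate case, so the approaches are close in spirit; the difference is mainly in how the upper bound is obtained. The paper observes directly that when $\ker(X)\subseteq\ker(K^{*})$ the block matrix $A=\begin{pmatrix} X & K\\ K^{*} & K^{*}X^{+}K\end{pmatrix}$ is positive semidefinite by Lemma~\ref{lem:Schur}, so for any $(L,Y)\in\Omega$ one has $\Tr[AB]\leq 0$ with $B=\begin{pmatrix} Y & L\\ L^{*} & -I\end{pmatrix}\leq 0$, which immediately gives $\Tr[XY]+\Tr[K^{*}L]+\Tr[KL^{*}]\leq \Tr[K^{*}X^{+}K]$ without eliminating $Y$ or completing a square. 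Your two-stage elimination (first $Y$, then $L$) is a bit more computational but has the advantage of being entirely self-contained: it does not appeal to Lemma~\ref{lem:Schur} for the upper bound, only for rewriting $\Omega$. For the infinite case the paper exhibits a concrete rank-one direction $L=t\,|v\rangle\langle w|$ built from a single vector $v\in\ker(X)\setminus\ker(K^{*})$, whereas you use the orthogonal component $K_{\perp}$ of $K$ in $\ker(L_{X})$; these are equivalent, and your identification of $\ker(X)\subseteq\ker(K^{*})$ with $K\in\mathrm{ran}(L_{X})$ is exactly the bookkeeping needed.
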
 

\begin{rem}  Let ${\mathcal C}_1$ denote  the set of maps $\phi$ that satisfy \eqref{LRC3} for all $K\in \M_n$, $X\in M_n^+$ with $\ker(X)\subseteq \ker(K^*)$.  Lemma~\ref{LTLS} has the consequence that ${\mathcal C}_1$ is a {\em closed convex cone} with the closure coming from the lower semicontinuity of $F$.  

The joint convexity of $F$ on $\M_n\times \M_n^{++}$, where $\M_n^{++}$ consists of the positive definite elements of $\M_n$, is due to Kiefer \cite{K59}; see also 
\cite[Theorem 1]{LR74}.  Here we will also need the lower semicontinuity on the larger set $\M_n\times \M_n^{+}$.  Finally, note that for all $K\in \M_n$,  $X\in \M_n^+$,
$$
F(K,X) = \lim_{\epsilon \downarrow 0} \Tr[ K^*(X+ \epsilon \one)^{-1} K]
$$
where the right side is finite if and only if $\ker(X)\subseteq \ker(K^*)$, in which case it equals $\Tr[K^*X^+K]$.
\end{rem}

\begin{proof}[Proof of Lemma~\ref{LTLS}]  Suppose first that $\ker(X) \subseteq \ker(K^*)$ so that by Lemma~\ref{lem:Schur}, $A := 
\begin{pmatrix} X & K \\ K^* & K^*X^+K\end{pmatrix} \geq 0$. Let $(L,Y)\in \Omega$ so that $B :=  \left(\begin{array}{cc}  Y & L \\ L^* & -I\end{array}\right) \leq 0$.
Then
\[
0 \geq \Tr[AB]  = \Tr[XY] + \Tr[KL^*] + \Tr[K^*L] - \Tr[K^*X^+K] \ , 
\]
which is the same as $F(K,X) \geq \Tr[XY] + \Tr[KL^*] + \Tr[K^*L]$. Take $L := X^+K$ and $Y := -LL^*$. Then by Lemma~\ref{lem:Schur} once more,
$(L,Y) \in \Omega$, and simple computation, using $X^+XX^+ =X^+$ and cyclicity of the trace, shows that with this choice, $F(K,X) =\Tr[XY] + \Tr[KL^*] + \Tr[K^*L]$

Now suppose that $\ker(X)$ is not contained in $\ker(K^*)$ so that for some unit vector $v$ with $Xv =0$, $K^*v \neq 0$, Define $w := \|K^*v\|^{-1}K^*v$ and for $t>0$, 
$L:= t|v\rangle\langle w|$. Then for all $t>0$, $(L,-LL^*)\in \Omega$ and $- \Tr[XLL^*] + \Tr[K^*L] + \Tr[K L^*]  = 2t\|K^*v\|$.  Hence in this case, the supremum is infinite. 
\end{proof}

Let us finish this section with a consequence of the lower-semicontinuity of the functional in Lemma \ref{LTLS}.

\begin{lem}\label{lem:kernelIncl}
Let $\phi:\M_{n}\ra\M_{m}$ denote a positive map. If either \eqref{tracialineq} or \eqref{tracialineqB} holds for every $(K,X) \in \M_m\times \M_m^+$ with $\ker(X) \subseteq \ker(K^*)$, then $\ker(\phi^*(X))\subseteq \ker(\phi^*(K)^*)$ holds for every $(K,X) \in \M_m\times \M_m^+$ with $\ker(X) \subseteq \ker(K^*)$.
\end{lem}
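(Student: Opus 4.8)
The plan is to read the right-hand side common to \eqref{tracialineq} and \eqref{tracialineqB} as the value of the functional $F$ from Lemma~\ref{LTLS}, now evaluated at the pair $(\phi^*(K),\phi^*(X))\in\M_n\times\M_n^+$ (note that $\phi^*$ is again positive). By the definition \eqref{Fdef}, $F(\phi^*(K),\phi^*(X))=\Tr[\phi^*(K)^*\phi^*(X)^+\phi^*(K)]$ precisely when $\ker(\phi^*(X))\subseteq\ker(\phi^*(K)^*)$, and $F(\phi^*(K),\phi^*(X))=\infty$ otherwise. Thus the desired kernel inclusion is \emph{equivalent} to the finiteness $F(\phi^*(K),\phi^*(X))<\infty$, and the whole argument reduces to deducing this finiteness from the hypothesis. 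I would do so by regularizing, applying the hypothesis to the strictly positive perturbations $X_\eps:=X+\eps\one_m$ (which form valid pairs with $K$ since $\ker(X_\eps)=\{0\}$), and then invoking the lower semicontinuity of $F$.

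First I would check that the left-hand sides stay bounded as $\eps\downarrow 0$. For \eqref{tracialineqB} this is immediate from the limit formula in the remark following Lemma~\ref{LTLS}: $\Tr[K^*(X+\eps\one_m)^{-1}K]\to\Tr[K^*X^+K]<\infty$, the limit being finite because $\ker(X)\subseteq\ker(K^*)$. For \eqref{tracialineq} the same kernel condition gives matrix convergence $K^*(X+\eps\one_m)^{-1}K\to K^*X^+K$, since the singular $\eps^{-1}$-term is annihilated by $K^*P_0=0$, where $P_0$ projects onto $\ker(X)\subseteq\ker(K^*)$; then continuity of $\phi^*$ and of the trace yields $\Tr[\phi^*(K^*(X+\eps\one_m)^{-1}K)]\to\Tr[\phi^*(K^*X^+K)]<\infty$. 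In either case there is a finite bound $C$ with $\Tr[\phi^*(K)^*\phi^*(X_\eps)^+\phi^*(K)]\leq C$ for all small $\eps>0$.

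The step I expect to be the main obstacle is to identify this bounded quantity with $F(\phi^*(K),\phi^*(X_\eps))$, which requires the kernel condition $\ker(\phi^*(X_\eps))\subseteq\ker(\phi^*(K)^*)$. Since $X\leq\|X\|\,\one_m$ and $\phi^*$ is positive, $\ker(\phi^*(\one_m))\subseteq\ker(\phi^*(X))$, so $\ker(\phi^*(X_\eps))=\ker(\phi^*(X))\cap\ker(\phi^*(\one_m))=\ker(\phi^*(\one_m))$ for every $\eps>0$. It therefore suffices to establish the $\eps$-independent inclusion $\ker(\phi^*(\one_m))\subseteq\ker(\phi^*(K)^*)$ for every $K$. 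A vector $v$ lies in $\ker(\phi^*(\one_m))$ iff $\Tr[\phi(vv^*)]=0$, i.e. (by positivity) iff $\phi(vv^*)=0$. Expanding the positivity relation $\phi\big((v+\lambda u)(v+\lambda u)^*\big)\geq 0$ for $\lambda\in\C$, using $\phi(vv^*)=0$ and the Hermiticity-preservation $\phi(uv^*)=\phi(vu^*)^*$, and pairing against an arbitrary $\xi$ gives $2\,\mathrm{Re}(\bar\lambda a)+|\lambda|^2 b\geq 0$ for all $\lambda$, with $a=\langle\xi,\phi(vu^*)\xi\rangle$ and $b=\langle\xi,\phi(uu^*)\xi\rangle\geq 0$; the elementary fact that this forces $a=0$ shows $\langle\xi,\phi(vu^*)\xi\rangle=0$ for all $\xi$, hence $\phi(vu^*)=0$ for every $u$. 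Consequently $\langle u,\phi^*(K)^*v\rangle=\Tr[K^*\phi(vu^*)]=0$ for all $u$, so $\phi^*(K)^*v=0$, as required.

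With this inclusion in hand, $\Tr[\phi^*(K)^*\phi^*(X_\eps)^+\phi^*(K)]=F(\phi^*(K),\phi^*(X_\eps))$, so the bound from the second paragraph reads $F(\phi^*(K),\phi^*(X_\eps))\leq C$. Since $\phi^*(X_\eps)=\phi^*(X)+\eps\,\phi^*(\one_m)\to\phi^*(X)$ while $\phi^*(K)$ is fixed, the lower semicontinuity of $F$ (Lemma~\ref{LTLS}) gives $F(\phi^*(K),\phi^*(X))\leq\liminf_{\eps\downarrow 0}F(\phi^*(K),\phi^*(X_\eps))\leq C<\infty$, which is exactly the inclusion $\ker(\phi^*(X))\subseteq\ker(\phi^*(K)^*)$. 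I expect the only genuinely nontrivial ingredient to be the polarization argument of the third paragraph, since everything else follows routinely from the regularization and the lower semicontinuity already prepared in Lemma~\ref{LTLS} and its remark.
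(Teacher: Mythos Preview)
Your proof is correct and follows essentially the same route as the paper: regularize with $X_\eps=X+\eps\one_m$, use the hypothesis to bound $F(\phi^*(K),\phi^*(X_\eps))$, and invoke the lower semicontinuity of $F$ from Lemma~\ref{LTLS} to conclude $F(\phi^*(K),\phi^*(X))<\infty$. The only difference is that the paper states the inclusion $\ker(\phi^*(\one_m))\subseteq\ker(\phi^*(K)^*)$ in one line ``by positivity of $\phi$'', whereas you supply a full polarization argument; your argument is fine, though a shorter route is to decompose $K$ as a linear combination of positives $P\leq c\one_m$ and use $0\leq\phi^*(P)\leq c\,\phi^*(\one_m)$.
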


\begin{proof}
Assume that \eqref{tracialineq} holds for every $(K,X) \in \M_m\times \M_m^+$ with $\ker(X) \subseteq \ker(K^*)$ and consider a particular such pair in the following. Define $X_\epsilon = X + \epsilon \one_m$ for every $\epsilon>0$. By positivity of $\phi$, we have $\ker(\phi^*(\one_m))\subseteq \ker(\phi^*(K)^*)$ and hence  
\[
\ker(\phi^*(X_\epsilon)) \subseteq \ker(\phi^*(K)^*),
\]
for every $\epsilon>0$. Furthermore, we note that $K^*X_\epsilon^+K\ra K^*X^+K$ as $\epsilon\ra 0$ whenever $\ker(X) \subseteq \ker(K^*)$. Using first the lower semicontinuity of $F$ (see Lemma \ref{LTLS}) and then \eqref{tracialineq}, we find that
\begin{align*}
F(\phi^*(K),\phi^*(X)) &\leq  \liminf_{\epsilon\ra 0} F(\phi^*(K),\phi^*(X_\epsilon)) \\
&= \liminf_{\epsilon\ra 0} \Tr\lbr \phi^*(K)^*\phi^*(X_\epsilon)^+ \phi^*(K)\rbr \\
&\leq \liminf_{\epsilon\ra 0} \Tr\lbr \phi^*(K^*X_\epsilon^+K)\rbr \\
& = \Tr\lbr \phi^*(K^*X^+K)\rbr < \infty.
\end{align*}
From Lemma \ref{LTLS} we conclude that  $\ker(\phi^*(X)) \subseteq \ker(\phi^*(K)^*)$. 

The same proof applies with minor modification when we assume that \eqref{tracialineqB} holds for every $(K,X) \in \M_m\times \M_m^+$ with $\ker(X) \subseteq \ker(K^*)$.
\end{proof}

\section{Proof of Theorem~\ref{mainH} and Theorem~\ref{main}}

\begin{proof}[Proof of Theorem~\ref{mainH}]
For any $A\in \M_{m}$, 
$$\left(\begin{array}{cc} 0 & -A\\ 0 & \phantom{-}\one_{m}\end{array}\right) \left(\begin{array}{cc} \phantom{-}0 & 0 \\ -A^* &\one_{m}\end{array}\right)
= \left(\begin{array}{cc} AA^* & -A\\ -A^* & \phantom{-}\one_{m}\end{array}\right)\ .$$
Taking $A := \phi^*(X)^{+}\phi^*(K)$, 
\begin{align*}
&\left(\begin{array}{cc} AA^* & -A\\ -A^* & \phantom{-}\one_{m}\end{array}\right) \begin{pmatrix} \phi^*(X) & \phi^*(K) \\ \phi^*(K)^* & \phi^*(K^*X^{+} K) \end{pmatrix}  \\
&\quad\quad\quad\quad=\left(\begin{array}{cc}  Z &  -AD \\ -\phi^*(K)^*\phi^*(X)^{+}\phi^*(X)+\phi^*(K)^* & D\end{array}\right)
\end{align*}
where
\[
D =  \phi^*(K^*X^{+} K)  - \phi^*(K)^*\phi^*(X)^{+}\phi^*(K) \ ,
\]
and 
\[
Z = \phi^*(X)^{+}\phi^*(K)\phi^*(K)^*\phi^*(X)^{+}\phi^*(X) - \phi^*(X)^{+}\phi^*(K)\phi^*(K)^*.
\]
Since $\phi^*(X)^{+}\phi^*(X)\phi^*(X)^{+}=\phi^*(X)^{+}$ by the properties of the Moore-Penrose pseudo inverse $\Tr\lbr Z\rbr=0$, and the inequality \eqref{tracialineq} can be written as
\begin{equation}\label{equ:ProofStep}
\Tr\lbr \left(\begin{array}{cc} AA^* & -A\\ -A^* & \phantom{-}\one_{m}\end{array}\right) \begin{pmatrix} \phi^*(X) & \phi^*(K) \\ \phi^*(K)^* & \phi^*(K^*X^{+} K) \end{pmatrix}\rbr \geq 0.
\end{equation}
Interpreting this trace as the Hilbert-Schmidt inner product of two self-adjoint operators, we can bring the adjoint $(\ident_2\otimes \phi^*)^* = \ident_2\otimes \phi$ to the other side, and find that the trace in \eqref{equ:ProofStep} equals 
\begin{equation}\label{equ:ProofStep2}
\Tr\lbr \left(\begin{array}{cc} \phi(AA^*) & -\phi(A)\\ -\phi(A)^* & \phantom{-}\phi(\one_{m})\end{array}\right) \begin{pmatrix} X & K \\ K^* & K^*X^{+} K \end{pmatrix}\rbr \ .
\end{equation}
Since $\phi$ is a generalized Schwarz map,
\[
\left(\begin{array}{cc} \phi(AA^*) & -\phi(A)\\ -\phi(A)^* & \phantom{-}\phi(\one_{n})\end{array}\right) \geq 0
\]
and, by Lemma \ref{lem:Schur}, it is evident that 
\[
\begin{pmatrix} X & K \\ K^* & K^*X^{+} K \end{pmatrix}\geq 0 . 
\]
We conclude that the expression in \eqref{equ:ProofStep2}  is  the Hilbert-Schmidt inner product of two positive operators, and hence positive. 

Now, suppose that $\phi$ is not a generalized Schwarz map. Then, there exists $A\in \M_n$ such that 
\[\begin{pmatrix} \phi(\one_{n}) & \phi(A) \\ \phi(A)^* & \phi\lb A^*A\rb\end{pmatrix}\]  
has an eigenvalue $-\lambda < 0$. Therefore, there exist $u,v\in \C^m$ with $\braket{u}{u}+\braket{v}{v}=1$ such that
\begin{align*}
-\lambda &= \left \langle \left(\begin{array}{c} u\\ v\end{array}\right) , \begin{pmatrix} \phi(\one_{n}) & \phi(A) \\ \phi(A)^* & \phi\lb A^*A\rb\end{pmatrix} 
\left(\begin{array}{c} u\\ v\end{array}\right) \right\rangle \\ &= \Tr\left[ \begin{pmatrix} \phi(\one_{n}) & \phi(A) \\ \phi(A)^* & \phi\lb A^*A\rb\end{pmatrix} 
 \left(\begin{array}{cc} |u\rangle\langle u| &  |u\rangle\langle v|\\  |v\rangle\langle u| &|v\rangle\langle v|\end{array}\right)\right] \ .
\end{align*}
Define $X := |v\rangle\langle v|$ and $K^* := |u\rangle\langle v|$.  Then $\ker(X) = \ker(K^*)$, and $K^*X^+K = |u\rangle\langle u|$. That is,
$$
\left(\begin{array}{cc} K^*X^+K & K^*\\ K & X\end{array}\right)  = \left(\begin{array}{cc} |u\rangle\langle u| &  |u\rangle\langle v|\\  |v\rangle\langle u| &|v\rangle\langle v|\end{array}\right) \geq 0\ .
$$
Then, from above,
$$
-\lambda = \Tr\left[ \left(\begin{array}{cc} \one_n & A\\A^* & A^*A\end{array}\right) \left(\begin{array}{cc} \phi^*(K^*X^+K) & \phi^*(K^*)\\ \phi^*(K) & \phi^*(X)\end{array}\right) \right]\ .
$$
Defining the inner product on the real span of $\M_n\times \M_n^+$
$$
\langle (A,B),(C,D)\rangle := \Tr[BD] + \tr[A^*C] + \Tr[AC^*]\ ,
$$
we have
\begin{eqnarray*}
\lambda+ \Tr[\phi^*(K^*X^+K)] 
&= & \Tr[\phi^*(K)(-A)] + \Tr[\phi^*(K^*)(-A^*)] + \Tr[\phi^*(X)(-AA^*)]\\
&\leq& \sup_{(L,Y)\in \Omega}  \langle (\phi^*(K),\phi^*(X)),(L,Y)\rangle = F(\phi^*(K),\phi^*(X))
\end{eqnarray*}
by Lemma~\ref{LTLS}. Again by Lemma \ref{LTLS}, we conclude that if 
\[
\Tr[\phi^*(K^*X^+K)] \geq \Tr[\phi^*(K)^*\phi^*(X)^+\phi^*(K)], 
\]
then $\ker(\phi^*(X)) \nsubseteq \ker(\phi^*(K))$. Finally, Lemma \ref{lem:kernelIncl} implies that \eqref{tracialineq} cannot hold for all $(\tilde{K},\tilde{X}) \in \M_m\times \M_m^+$ such that $\ker(\tilde{X}) \subseteq \ker(\tilde{K})$.
\end{proof} 

Our proof of Theorem~\ref{main} uses another duality argument for a tracial inequality closely related to  \eqref{LRC5}, but which is expressed in terms of the function
 $F(K,X)$ introduced in Lemma~\ref{LTLS}:
\begin{equation}\label{LRC5B}
F(K,X) \geq F(\phi^*(K),\phi^*(X))  
\end{equation}

The relation between the two inequalities \eqref{LRC5} and \eqref{LRC5B} is that for any given positive map $\phi$, the following two statements are equivalent:

\medskip

\noindent{\it (1)} For all $(K,X) \in \M_m\times \M_m^+$ with $\ker(X) \subseteq \ker(K^*)$, \eqref{LRC5} is satisfied.

\medskip

\noindent{\it (2)} For all $(K,X) \in \M_m\times \M_m^+$, \eqref{LRC5B} is satisfied. 

\medskip

To see this, suppose first that $\phi$ is such that {\it (1)} is valid. If $\ker(X) \nsubseteq \ker(K^*)$, then $F(K,X) = \infty$, and \eqref{LRC5B} is trivially satisfied.
If $\ker(X) \subseteq \ker(K^*)$, then we have $\ker(\phi^*(X)) \subseteq \ker(\phi^*(K^*))$ by Lemma \ref{lem:kernelIncl}. Consequently, both $F(K,X)$ and 
$F(\phi^*(K),\phi^*(X))$ are finite, and  \eqref{LRC5B} is satisfied.  If $\phi$ is such that {\it (2)} is valid, then whenever $\ker(X) \subseteq \ker(K^*)$, $F(\phi^*(K),\phi^*(X)) < \infty$, so that \eqref{LRC5} is satisfied.

We shall now show that a positive map $\phi$ is such that statement {\it (2)} is valid if and only if $\phi$ satisfies the Schwarz inequality. 
To see this, define
\[
G(L,Y) := \begin{cases}  0 & (L,Y) \in \Omega \\ \infty & {\rm otherwise} \end{cases}\ .
\]
Note that $G$ is evidently jointly convex and lower-semicontinuous. By Lemma~\ref{LTLS}, together with the Fenchel-Moreau Theorem, we conclude that $F(K,X)$ and $G(L,Y)$ are Legendre transforms of one another
with respect to the dual pairing
$$
\langle(K,X),(L,Y)\rangle := \Tr[XY] + \Tr[KL^*] + \Tr[KL^*]\ .
$$
That is,
$$
 G(L,Y) = \sup_{(K,X)} \{ \langle(K,X),(L,Y)\rangle - F(K,X)\} $$
 and
 $$F(K,X) = 
 \sup_{(L,Y)}\{ \langle(K,X),(L,Y)\rangle - G(L,Y)\}\ .
 $$

Next, by Lemma~\ref{lem:Schur}, $(L,Y)\in \Omega$ if and only if $Y  \leq -LL^*$. Thus for a positive map $\phi$, 
\begin{equation}\label{LRC5C}
G(\phi(L),\phi(Y)) \leq G(L,Y)
\quad{\rm for\ all}\quad (L,Y)\in \M_n\times\M_n^+
\end{equation}
 if and only if $\phi$ satisfies the Schwarz inequality. With this characterization of maps satisfying the Schwarz inequality in hand, we are ready to prove  Theorem~\ref{main}: 

\begin{proof}[Proof of Theorem~\ref{main}] By the equivalence of statements {\it (1)} and {\it (2)}, together with the characterization of maps satisfying the Schwarz inequality, both discussed just above, it suffices to show that $\phi$ is such that  \eqref{LRC5B} is satisfied for all $(K,X)\in \M_m\times\M_m^+$ if and only if $\phi$ is such that \eqref{LRC5C} is satisfied for all $(L,Y) \in 
\M_n\times\M_n^+$.

Suppose  $\phi$ satisfies \eqref{LRC5B}. Then  
 \begin{eqnarray*}
 G(\phi(L),\phi(Y)) &=& \sup_{(K,X)} \{ \langle(K,X),(\phi(L),\phi(Y))\rangle - F(K,X)\}\\
 &\leq&  \sup_{(K,X)} \{ \langle(\phi^*(K),\phi^*(X)),(L,Y)\rangle - F(\phi^*(K),\phi^*(X))\}  \leq G(L,Y)\ .
 \end{eqnarray*}
 Likewise, suppose that $\phi$ satisfies \eqref{LRC5C}. Then 
 \begin{eqnarray*}
 F(\phi^*(K),\phi^*(X)) &=&  \sup_{(L,Y)}\{ \langle(\phi)^*(K),\phi^*(X)),(L,Y)\rangle - G(L,Y)\}\\
 &\leq&  \sup_{(L,Y)}\{ \langle (K,X),(\phi(L),\phi(Y))\rangle - G(\phi(L),\phi(Y))\}\\
 &\leq& F(K,X)\ .
 \end{eqnarray*}
 \end{proof}

 As an anonymous referee emphasized to us, Theorem~\ref{mainH} and Theorem~\ref{main} are closely related. To bring out this point, we give a second proof of Theorem~\ref{mainH} using Theorem~\ref{main}.

\begin{proof}[Second proof of Theorem~\ref{mainH}]  Suppose first  that $\phi$ is a positive map with the property that  $S := \phi(\one_n) > 0$,  Let $\psi$ be defined as in \eqref{LRC2d}, so that in this notation
\begin{equation}\label{relation}
\psi(K) = S^{-1/2}\phi(K)S^{-1/2}  \quad{\rm and}\quad  \psi^*(K) = \phi^*(S^{-1/2} K S^{-1/2})\ .
\end{equation}
Since $\psi^*$ is trace preserving, 
\begin{eqnarray*}
\Tr[K^*X^{+}K] &=& \Tr[\psi^*(K^*X^{+}K)] = \Tr[\phi^*(S^{-1/2}K^*X^{+}KS^{-1/2})]\\ &=& \Tr[\phi^*(\widehat{K}^*\widehat{X}^{+}\widehat{K})] ,
\end{eqnarray*}
with $\widehat{X} :=  S^{-1/2} X S^{-1/2}$ and $\widehat{K} := S^{-1/2} K S^{-1/2}$. Evidently, we also have 
\[
\Tr[\psi^*(K)^*\psi^*(X)^{+}\psi^*(K)] = \Tr[\phi^*(\widehat{K}) \phi^*(\widehat{X})^{+}\phi^*(\widehat{K})]\ .
\]
Therefore, $\phi^*$  satisfies
\begin{equation}\label{flipA}
\Tr[\phi^*(\widehat{K}^*\widehat{X}^{+}\widehat{K})]  \geq \Tr[\phi^*(\widehat{K}) \phi^*(\widehat{X})^{+}\phi^*(\widehat{K})]
\end{equation}
if and only if
\begin{equation}\label{flipB}
\Tr[K^*X^{+}K]  \geq  \Tr[\psi^*(K)^*\psi^*(X)^{+}\psi^*(K)] \ . 
\end{equation}
Note that $v\in \ker(X)$ if and only if $S^{1/2}v \in \ker(\widehat{X})$, and likewise for $K^*$ so that
\begin{equation}\label{flipC}
\ker(X)\subseteq \ker(K^*)  \iff   \ker({\widehat X})\subseteq \ker(\widehat{K}^*)
\end{equation}
Thus, $\phi$ is such that whenever $\ker({\widehat X})\subseteq \ker(\widehat{K}^*)$,   \eqref{tracialineq} is satisfied, if and only if $\psi$ is such that whenever $\ker(X)\subseteq \ker(K^*)$, \eqref{tracialineqB} is satisfied. By
Theorem~\ref{main}, this last statement is true if and only if $\psi$ satisfies the Schwarz inequality, and then by what we have explained below Definition~\ref{GSdef},
this is the case if and only if $\phi$ is a generalized Schwarz map. 

This proves Theorem~\ref{mainH} under the additional assumption that $\phi(\one)>0$. We remove this restriction as follows:   Let $\mathcal{C}_1 $ be the convex cone consisting of  maps that satisfy the homogeneous inequality \eqref{tracialineq} for all 
$K\in \M_m$, $X\in \M_m^+$ such that $\ker(X) \subseteq \ker(K^*)$.  Let $\mathcal{C}_2$ be the convex cone consisting of generalized 
Schwarz maps.  We wish to show that $\mathcal{C}_1 =\mathcal{C}_2$, which is the same as
\begin{equation}\label{flipD} \mathcal{C}_1\cup \mathcal{C}_2  =   \mathcal{C}_1\cap \mathcal{C}_2 \ .
\end{equation}
We have seen that both $\mathcal{C}_1$ and $\mathcal{C}_2$ are closed. This is the basis of a simple approximation argument that proves \eqref{flipD}. 

Consider the map $\phi_D: \M_n\to \M_m$ defined by
$\phi_D(A) = \frac1n \Tr[A] \one_m$, which is unital and completely positive, and hence $\phi_D \in \mathcal{C}_1\cap \mathcal{C}_2$  (The  adjoint of $\phi_D$ is also known as the ``completely depolarizing channel''.)   Now let $\phi\in \mathcal{C}_1\cup \mathcal{C}_2$ and $\epsilon>0$. Define $\phi_\epsilon = \phi+\epsilon \phi_D$.
Then for each $\epsilon>0$, $\phi_\epsilon(\one_n) > 0$. By the first part of the proof, $\phi_\epsilon\in  \mathcal{C}_1\cap \mathcal{C}_2$, and then by closure, so is $\phi$. \end{proof}

\section{On the method of Hiai and Petz}

Now, we discuss the application of our results to a beautiful and simple method of 
Hiai and Petz \cite{HP12} for proving a wide range of inequalities that are of great interest in mathematical physics. 

Let $\mathcal{H}_m$ denote $\M_m$ equipped with the Hilbert-Schmidt inner product, making it a complex Euclidean space. For any $Y \in \M_{m}$, define the operator $L_Y$ on $\mathcal{H}_m$ by $L_Y A = YA$, and for any 
$X \in \M_{m}$, define the operator $R_X$ on $\mathcal{H}_m$ by $R_X A = AX$.  Note that $L_Y$ and $R_X$ commute, and that if $Y,X\geq 0$, then $L_Y,R_X\geq 0$ (as operators on $\mathcal{H}_m$). Therefore, for any function $f:(0,\infty)\to (0,\infty)$ extended by $f(0)= 0$, one may define the positive semidefinite operator
\begin{equation}\label{Jdef}
{\mathbb J}_f(X,Y) := f(R_X L_Y^{+} )L_Y ,
\end{equation}
for any $Y,X\geq 0$.

For a positive map $\phi:\M_n\ra \M_m$ consider the block operator 
\begin{equation}\label{specschurA}
\begin{pmatrix} {\mathbb J}_f(\phi^*(X),\phi^*(Y)) & \phi^* \\ \phi & {\mathbb J}^+_f(X,Y)\end{pmatrix}\in B\lb \cH_n\oplus \cH_m\rb\ .
\end{equation}
By Lemma~\ref{lem:Schur}, if 
\begin{equation}\label{specschurB}
\ker({\mathbb J}_f(X,Y)) \subseteq \ker(\phi^*)  \quad{\rm and} \quad \ker({\mathbb J}_f(\phi^*(X),\phi^*(Y))) \subseteq \ker(\phi) \ ,
\end{equation}
then
\begin{equation}\label{specschurC}
{\mathbb J}_f(\phi^*(X),\phi^*(Y))   \geq  \phi^* {\mathbb J}_f(X,Y)\phi   \iff    {\mathbb J}^+_f(X,Y)   \geq  \phi {\mathbb J}_f^+(\phi^*(X),\phi^*(Y))\phi^*
\end{equation}
since both conditions are then equivalent to the block operator in \eqref{specschurA} being positive semidefinite. For completeness we point out the following connection between the Schwarz inequality and this block operator:

\begin{thm}\label{thm:IdentityMon}
For a positive map $\phi:\M_n\ra\M_m$ the following are equivalent:
\begin{enumerate}
\item The map $\phi$ satisfies the Schwarz inequality \eqref{LRC2b}.
\item The block operator in \eqref{specschurA} for $f=\ident$, i.e., 
\[
\begin{pmatrix} R_{\phi^*(X)} & \phi^* \\ \phi & R^{-1}_X\end{pmatrix}\in B\lb \cH_n\oplus \cH_m\rb\ ,
\]
is positive semidefinite for every $X\in \M_m$ with $X>0$.
\end{enumerate}
\end{thm}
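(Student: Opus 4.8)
The plan is to derive both implications at once from a single chain of equivalences: apply the Schur complement lemma to the block operator in statement (2), then translate the resulting operator inequality on $\cH_n$ into a trace inequality and finally quantify over $X$.

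First I would apply Lemma~\ref{lem:Schur} to the block operator displayed in (2), noting that its off-diagonal entries satisfy $(\phi)^* = \phi^*$, so it has the required form $\begin{pmatrix} \tilde X & \tilde K^* \\ \tilde K & \tilde Y\end{pmatrix}$ with $\tilde K = \phi$, $\tilde X = R_{\phi^*(X)}$ and $\tilde Y = R_X^{-1}$. Since $X > 0$, the lower-right block $R_X^{-1}$ is a positive \emph{invertible} operator on $\cH_m$, so its kernel is trivial, the kernel hypothesis in Lemma~\ref{lem:Schur} is automatic, and $(R_X^{-1})^+ = R_X$. Hence the second characterization in Lemma~\ref{lem:Schur} shows that the block operator is positive semidefinite if and only if
\[
R_{\phi^*(X)} \geq \phi^* R_X \phi
\]
as operators on $\cH_n$.

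Next I would unfold this operator inequality by pairing it with an arbitrary $A\in \cH_n = \M_n$. Using $\langle A, R_{\phi^*(X)} A\rangle = \Tr[A^*A\,\phi^*(X)] = \Tr[\phi(A^*A)\, X]$ (moving $\phi$ across the Hilbert--Schmidt pairing via $\langle A^*A,\phi^*(X)\rangle = \langle \phi(A^*A),X\rangle$, and using that a positive map is Hermiticity-preserving) together with $\langle A, \phi^* R_X \phi A\rangle = \langle \phi(A), R_X\phi(A)\rangle = \Tr[\phi(A)^*\phi(A)\, X]$, the inequality for a fixed $X>0$ becomes
\[
\Tr\big[(\phi(A^*A) - \phi(A)^*\phi(A))\,X\big] \geq 0 \qquad \text{for all } A\in \M_n.
\]

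Finally I would quantify over $X$. For fixed $A$ the matrix $M := \phi(A^*A) - \phi(A)^*\phi(A)$ is self-adjoint, and $\Tr[MX]\geq 0$ for all $X>0$ holds if and only if $M\geq 0$, since the positive definite cone is dense in the positive semidefinite cone. Running this over all $A$, the block operator is positive semidefinite for every $X>0$ if and only if $\phi(A^*A)\geq \phi(A)^*\phi(A)$ for all $A$, which is exactly the Schwarz inequality \eqref{LRC2b}. The whole argument is a clean chain of equivalences rather than two separate directions; the only point needing care is the bookkeeping of pseudoinverses and kernels in the Schur step, and this is trivialized by the invertibility forced by $X>0$, so I do not expect a genuine obstacle.
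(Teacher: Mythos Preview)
Your proof is correct and follows essentially the same route as the paper: reduce via the Schur complement lemma to $R_{\phi^*(X)} \geq \phi^* R_X \phi$, rewrite this as the trace inequality $\Tr[(\phi(A^*A)-\phi(A)^*\phi(A))X]\geq 0$, and then use density of $\{X>0\}$ to recover the Schwarz inequality. Your version is in fact slightly leaner, since by using characterization~(2) of Lemma~\ref{lem:Schur} you only need the trivial kernel condition $\ker(R_X^{-1})=\{0\}$, whereas the paper additionally invokes Lemma~\ref{lem:KernelIncl} to check the other kernel inclusion $\ker(R_{\phi^*(X)})\subseteq\ker(\phi)$, which is not needed for this direction of the Schur equivalence.
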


\begin{proof}
For every $X\in \M_m$ with $X>0$ we have $\lset 0\rset = \ker\lb R_X\rb\subseteq \ker\lb \phi^*\rb$. By Lemma \ref{lem:KernelIncl} we also have $\ker\lb R_{\phi^*(X)}\rb\subseteq \ker\lb \phi\rb$ and by Lemma \ref{lem:Schur} the block operator in the statement of the theorem is positive semidefinite if and only if $\phi^*R_X\phi\leq R_{\phi^*(X)}$ which is equivalent to the inequality
\[
\Tr\lbr \phi(K^*)\phi(K)X\rbr\leq \Tr\lbr \phi(K^*K)X\rbr ,
\]
for all $K\in \M_n$. Since the $X\in \M_m$ with $X>0$ generate a dense set in $\M_m$, this is equivalent to $\phi$ satisfying the Schwarz inequality. 

\end{proof}

Now suppose that
\begin{equation}\label{specschurD}
X,Y > 0 \qquad{\rm and}\qquad \phi^*(X),\phi^*(Y) > 0\ ,
\end{equation}
the latter condition being ensured by the former when $\phi^*(\one_m)> 0$.  Then \eqref{specschurB} is trivially satisfied, and we have the following lemma~\cite[Lemma 1]{HP12}:

\begin{lem}[Hiai, Petz]\label{HPlm} Let $X,Y$ and $\phi$ be such that \eqref{specschurD} is satisfied. Then \eqref{specschurC} is valid. 
\end{lem}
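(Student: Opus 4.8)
The plan is to reduce the assertion to the Schur-complement equivalence already recorded just above the lemma. By Lemma~\ref{lem:Schur}, each of the two inequalities appearing in \eqref{specschurC} is equivalent to positive semidefiniteness of the block operator \eqref{specschurA}, \emph{provided} the kernel inclusions \eqref{specschurB} hold. Hence it suffices to check that \eqref{specschurB} is automatically satisfied once \eqref{specschurD} is assumed; the equivalence \eqref{specschurC} then follows because both sides reduce to the same positivity statement.

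First I would verify that ${\mathbb J}_f(X,Y)$ is invertible whenever $X,Y>0$. Since $X,Y>0$, the commuting operators $R_X$ and $L_Y$ are strictly positive on $\cH_m$, so $L_Y^+ = L_Y^{-1}$ and the product $R_X L_Y^{-1} = L_Y^{-1} R_X$ of commuting positive-definite operators is again positive definite, with spectrum contained in $(0,\infty)$. Because $f$ maps $(0,\infty)$ into $(0,\infty)$, the functional calculus sends $R_X L_Y^{-1}$ to a positive-definite operator $f(R_X L_Y^{-1})$, which commutes with $L_Y$; multiplying by the positive-definite $L_Y$ shows that ${\mathbb J}_f(X,Y) = f(R_X L_Y^{-1}) L_Y$ is positive definite, so $\ker({\mathbb J}_f(X,Y)) = \{0\}$.

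Then I would run the identical computation with the pair $(X,Y)$ replaced by $(\phi^*(X),\phi^*(Y))$, which is strictly positive by the second half of \eqref{specschurD}, to obtain $\ker({\mathbb J}_f(\phi^*(X),\phi^*(Y))) = \{0\}$ as well. With both kernels trivial, the inclusions in \eqref{specschurB} hold vacuously, and the Schur-complement equivalence delivers \eqref{specschurC}. I do not anticipate a real obstacle here: the whole content is that strict positivity of the arguments forces ${\mathbb J}_f$ to be invertible, thereby discharging the kernel hypotheses of Lemma~\ref{lem:Schur}. The only point meriting care is that the functional calculus preserves strict positivity, which is immediate from $f((0,\infty)) \subseteq (0,\infty)$ together with the fact that $R_X L_Y^{-1}$ has spectrum in $(0,\infty)$.
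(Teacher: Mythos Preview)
Your proposal is correct and follows exactly the paper's approach: the paper simply asserts, in the sentence immediately preceding the lemma, that under \eqref{specschurD} the condition \eqref{specschurB} is ``trivially satisfied,'' and then the Schur-complement equivalence already established yields \eqref{specschurC}. You have merely spelled out why the kernels are trivial, which is the only content behind that one-line remark.
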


It is desirable to prove this equivalence without any conditions on $\phi$, only assuming that $X,Y > 0$. Towards this end, we prove the following lemma,  which provides some more flexibility in verifying the kernel containment conditions in Lemma~\ref{lem:Schur}. 

\begin{lem}\label{lem:KernelIncl}
For any positive map $\phi:\M_n\ra \M_m$ and $X\in \M^+_m$.
\begin{enumerate}
\item We have $\ker(R_X)\subseteq \ker(\phi^*)$ if and only if $\ker(X)\subseteq \ker(\phi(\one_n))$.
\item If $\ker(R_X)\subseteq \ker(\phi^*)$, then we have $\ker(R_{\phi^*(X)})\subseteq \ker(\phi)$.
\end{enumerate}
The same statements hold for $L_X$ and  $L_{\phi^*(X)}$ in place of $R_X$ and $R_{\phi^*(X)}$.
\end{lem}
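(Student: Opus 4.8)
The plan is to reduce all four statements to a single fact about positive maps: if a positive map annihilates an orthogonal projection, then it annihilates every \emph{one-sided} product with that projection. Throughout, write $P$ for the orthogonal projection onto $\ker(X)$ and $Q = \one_m - P$ for the projection onto the range of $X$; recall that $\phi^*$ is positive (being the adjoint of a positive map) and Hermiticity preserving. First I would record the elementary identifications $\ker(R_X) = \{A \in \M_m : AP = A\} = \M_m P$ and $\ker(L_X) = P\,\M_m$, together with the observations $P \in \ker(R_X)\cap\ker(L_X)$ (since $PX = XP = 0$) and, assuming $X \neq 0$ (the case $X=0$ being trivial), $X \geq \mu Q$ where $\mu > 0$ is the least nonzero eigenvalue of $X$.

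The crux --- and the step I expect to be the main obstacle --- is the following claim: \emph{for a positive map $\psi:\M_m\to\M_n$ and a projection $P$ with $\psi(P)=0$, one has $\psi(CP) = \psi(PC) = 0$ for every $C\in\M_m$.} An order bound alone only yields the two-sided statement $\psi(PCP)=0$ (from $-\|C\|_\infty P \leq PCP \leq \|C\|_\infty P$), so the one-sided version genuinely requires a sharper argument. I would expand the manifestly positive operator
\[
(\bar z P + C)(zP + C^*) = |z|^2 P + \bar z\, PC^* + z\, CP + CC^* \geq 0,
\]
apply $\psi$, use $\psi(P) = 0$ and $\psi(PC^*) = \psi(CP)^*$, and set $W := \psi(CP)$ to obtain $z W + \bar z W^* + \psi(CC^*) \geq 0$ for all $z\in\C$. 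Writing $z = re^{i\theta}$ and letting $r\to\infty$ forces $e^{i\theta}W + e^{-i\theta}W^* \geq 0$ for every $\theta$; evaluating at $\theta \in \{0,\pi,\pm\pi/2\}$ gives both $W + W^* = 0$ and $W = W^*$, hence $W = 0$. The statement for $\psi(PC)$ follows by applying this to $C^*$ and taking adjoints.

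With this claim in hand, part (1) is immediate. For the ``if'' direction, if $\ker(X)\subseteq\ker(\phi(\one_n))$ then $\phi(\one_n)P = 0$, so $\Tr[\phi^*(P)] = \langle\phi(\one_n),P\rangle = \Tr[\phi(\one_n)P] = 0$; since $\phi^*(P)\geq 0$ this gives $\phi^*(P)=0$, and the claim yields $\phi^*(CP)=0$ for all $C$, i.e. $\ker(R_X) = \M_m P \subseteq \ker(\phi^*)$. For the ``only if'' direction, since $P\in\ker(R_X)\subseteq\ker(\phi^*)$ we get $\phi^*(P)=0$, whence $0 = \Tr[\phi^*(P)] = \Tr[\phi(\one_n)P]$ forces $\phi(\one_n)P=0$ (both operators being positive), i.e. $\ker(X)\subseteq\ker(\phi(\one_n))$.

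For part (2), I would apply part (1) to the positive map $\phi^*:\M_m\to\M_n$ evaluated at $\phi^*(X)\in\M_n^+$; since $(\phi^*)^* = \phi$, this reduces the goal $\ker(R_{\phi^*(X)})\subseteq\ker(\phi)$ to showing $\ker(\phi^*(X))\subseteq\ker(\phi^*(\one_m))$. The hypothesis $\ker(R_X)\subseteq\ker(\phi^*)$ gives $\phi^*(P)=0$ (as $P\in\ker(R_X)$), so $\phi^*(\one_m) = \phi^*(P)+\phi^*(Q) = \phi^*(Q)$; and $X\geq\mu Q$ gives $\phi^*(X)\geq\mu\,\phi^*(Q)\geq 0$, whence $\ker(\phi^*(X))\subseteq\ker(\phi^*(Q)) = \ker(\phi^*(\one_m))$, as needed. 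Finally, the statements for $L_X$ and $L_{\phi^*(X)}$ follow verbatim with left products in place of right products, using the $\psi(PC)=0$ half of the claim and the identification $\ker(L_X)=P\,\M_m$.
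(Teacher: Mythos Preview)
Your proof is correct and takes a genuinely different route from the paper's.

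The paper argues each implication directly. For the backward direction of (1), it observes that $\ker(X)\subseteq\ker(\phi(\one_n))=\ker(\phi(Y))$ for every invertible $Y\in\M_n^+$, so $X\geq\mu\,\phi(Y)$ for some $\mu>0$; sandwiching by $K,K^*$ with $KX=0$ yields $K\phi(Y)=0$, and then the duality $\Tr[K\phi(Y)]=\Tr[Y\phi^*(K)]=0$ for all such $Y$ forces $\phi^*(K)=0$. For the forward direction it tests against rank-one matrices $|w\rangle\langle v|$ with $v\in\ker(X)$. Part (2) is handled by a second, independent sandwiching argument of the same flavor. By contrast, you isolate a single reusable claim---\emph{$\psi(P)=0$ implies $\psi(CP)=\psi(PC)=0$ for positive $\psi$}---and prove it via the one-parameter family $(\bar z P+C)(zP+C^*)\geq 0$; this immediately dispatches both directions of (1), and then you reduce (2) to (1) applied to $\phi^*$ by the neat identity $\phi^*(\one_m)=\phi^*(Q)$ (from $\phi^*(P)=0$) together with $X\geq\mu Q$. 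Your approach is more modular and avoids the repetition in the paper's treatment of (2); the paper's approach stays closer to elementary order-and-trace manipulations and does not need the $(\bar z P+C)(zP+C^*)$ trick. Both are short and self-contained.
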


\begin{proof}
Assume that $\ker(R_X)\subseteq \ker(\phi^*)$ for some $X\in \M^+_m$, and consider some $\ket{v}\in \ker(X)$. 
Clearly, we have $\proj{w}{v}X=0$ and hence $\phi^*(\proj{w}{v})=0$ for every $\ket{w}$ by assumption. 
Taking the trace shows that $\bra{v}\phi(\one_n)\ket{w}=0$ for every $\ket{w}$ and therefore we have $\ket{v}\in \ker(\phi(\one_n))$. 
For the other direction, assume that $\ker(X)\subseteq \ker(\phi(\one_n))$. By positivity of $\phi$, we 
have $\ker(\phi(Y))=\ker(\phi(\one_n))$ for any invertible $Y\in \M^+_m$. 
Now, consider some invertible $Y\in \M^+_m$ and some $K\in \M_m$ such that $R_X(K)=KX=0$. Note that $0=KXK^*\geq \mu K\phi(Y)K^*$ for 
some $\mu>0$ and hence $K\phi(Y)=0$. Taking the trace of this operator we conclude that $\Tr\lbr Y\phi^*(K)\rbr=0$ and 
finally that $\phi^*(K)=0$ since the invertible $Y\in \M^+_m$ was chosen arbitrarily. 

Assume again that $\ker(R_X)\subseteq \ker(\phi^*)$ for some $X\in \M^+_m$. Consider $K\in \M_n$ such that $R_{\phi^*(X)}(K)=K\phi^*(X)=0$ and any $Y\in \M^+_m$ satisfying $\ker(\phi(\one_n))\subseteq \ker(Y)$. By the previous argument, there exists some $\lambda >0$ satisfying $X\geq \lambda Y$ and by positivity of $\phi^*$ we have $\phi^*(X)\geq \lambda \phi^*(Y)$. We conclude that 
\[
0 = K\phi^*(X)K^*\geq \lambda K\phi^*(Y)K^* .
\]
Since $\phi^*(Y)\geq 0$, this implies
\[
K\phi^*(Y)K^* =\lb K\phi^*(Y)^{1/2}\rb\lb \phi^*(Y)^{1/2}K^*\rb =0 ,
\]
and we conclude that $K\phi^*(Y)^{1/2}=0$ and hence $K\phi^*(Y)=0$ as well. Finally, we can take the trace and conclude that
\[
0=\Tr\lbr \phi^*(Y)K\rbr = \Tr\lbr Y\phi(K)\rbr. 
\]
Since $Y\in \M^+_m$ satisfying $\ker(\phi(\one_n))\subseteq \ker(Y)$ in the above argument was arbitrary, we conclude that $\phi(K)=0$.  The proof evidently adapts to treat the case in which  $R_X$ and $R_{\phi^*(X)}$ are replaced by $L_X$ and  $L_{\phi^*(X)}$.
\end{proof}

The following is a theorem of Hiai and Petz \cite[Theorem 5]{HP12} with relaxed conditions on the positive map $\phi$. Using the results in the previous section, we can carry through the approach of Hiai and Petz without assuming that $\phi$ is unital, or what is the same, without assuming that $\phi^*$ is trace preserving. To our knowledge, this is the first time a proof of this statement under these general conditions appears in the literature. 

\begin{thm}[Hiai, Petz]\label{HPext}  Let $f:(0,\infty) \to (0,\infty)$ be operator monotone,  and define $f(0)= 0$. 
Let  ${\mathbb J}_f$ 
 be defined by \eqref{Jdef}.
Let $\phi:\M_n\ra\M_m$ satisfy the Schwarz inequality. The following inequalities are both valid:

\smallskip
\noindent{\it (a)}  For all positive definite $X,Y\in \M_{m}$, .
\[
\phi {\mathbb J}_f(\phi^*(X), \phi^*(Y))^{+}\phi^*     \leq {\mathbb J}_f(X,Y)^{-1}
\]

\smallskip
\noindent{\it (b)}  For all positive definite $X,Y\in \M_{m}$,  
\[
\phi^*  {\mathbb J}_f(X,Y)  \phi  \leq   {\mathbb J}_f(\phi^*(X), \phi^*(Y))\ .
\]
\end{thm}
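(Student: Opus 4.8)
The plan is to reduce both inequalities to statement \textit{(b)} via the Schur-complement equivalence \eqref{specschurC}, and then to prove \textit{(b)} by decomposing ${\mathbb J}_f$ through the integral representation of $f$. Note that the two sides of the equivalence in \eqref{specschurC} are precisely \textit{(b)} (the left inequality) and \textit{(a)} (the right inequality, using that $X,Y>0$ forces ${\mathbb J}_f(X,Y)$ to be invertible, so that ${\mathbb J}_f^+(X,Y)={\mathbb J}_f(X,Y)^{-1}$). Hence, once the kernel hypotheses \eqref{specschurB} are checked, it suffices to prove \textit{(b)}.

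First I would verify \eqref{specschurB}. Since $X,Y>0$, the operator $R_XL_Y^{-1}$ is positive and invertible and $f>0$ on $(0,\infty)$, so ${\mathbb J}_f(X,Y)$ is invertible and $\ker({\mathbb J}_f(X,Y))=\{0\}\subseteq\ker(\phi^*)$. For the second inclusion, working in the joint spectral decomposition of the commuting operators $R_{\phi^*(X)}$ and $L_{\phi^*(Y)}$ and using $f(0)=0$ together with $f>0$ on $(0,\infty)$, one finds $\ker({\mathbb J}_f(\phi^*(X),\phi^*(Y)))=\ker(R_{\phi^*(X)})+\ker(L_{\phi^*(Y)})$. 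Because $X,Y>0$, the hypotheses $\ker(R_X)\subseteq\ker(\phi^*)$ and $\ker(L_Y)\subseteq\ker(\phi^*)$ of Lemma~\ref{lem:KernelIncl} hold trivially, and the lemma yields $\ker(R_{\phi^*(X)})\subseteq\ker(\phi)$ and $\ker(L_{\phi^*(Y)})\subseteq\ker(\phi)$; their sum therefore lies in $\ker(\phi)$, which is \eqref{specschurB}.

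Next I would prove \textit{(b)}. The operator monotone $f$ admits an integral representation as a positive combination of the constant function, the identity $t\mapsto t$, and the functions $f_s(t)=\frac{t}{t+s}$ for $s>0$; since $f\mapsto{\mathbb J}_f$ is linear in $f$, this writes ${\mathbb J}_f(X,Y)$ as a corresponding positive combination of $L_Y$, $R_X$, and ${\mathbb J}_{f_s}(X,Y)=(sR_X^{-1}+L_Y^{-1})^{-1}$, which is the parallel sum $A:B=(A^{-1}+B^{-1})^{-1}$ of $A=s^{-1}R_X$ and $B=L_Y$. It thus suffices to prove \textit{(b)} for each building block and add. The $R_X$ block is exactly Theorem~\ref{thm:IdentityMon}, namely $\phi^*R_X\phi\leq R_{\phi^*(X)}$, and the $L_Y$ block follows identically from the Schwarz inequality applied to $K^*$, giving $\phi^*L_Y\phi\leq L_{\phi^*(Y)}$. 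For the parallel-sum block I would use the Anderson--Duffin variational identity $\langle\xi,(A:B)\xi\rangle=\inf_{\xi=\eta+\zeta}\big[\langle\eta,A\eta\rangle+\langle\zeta,B\zeta\rangle\big]$: for any splitting $K=\eta+\zeta$ the vector $\phi(K)=\phi(\eta)+\phi(\zeta)$ is admissible on the $(X,Y)$ side, and bounding the two resulting terms by the $R$- and $L$-block inequalities and then taking the infimum over $K=\eta+\zeta$ gives $\phi^*{\mathbb J}_{f_s}(X,Y)\phi\leq{\mathbb J}_{f_s}(\phi^*(X),\phi^*(Y))$. Integrating over the representation yields \textit{(b)}, and \eqref{specschurC} then delivers \textit{(a)}.

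The main obstacle is that $\phi$ is not assumed to satisfy $\phi^*(\one_m)>0$, so $\phi^*(X)$ and $\phi^*(Y)$ may be singular and the target operator ${\mathbb J}_f(\phi^*(X),\phi^*(Y))$ degenerate; this is exactly the case excluded by the original Hiai--Petz hypothesis \eqref{specschurD}. The parallel-sum blocks cause no trouble, since the Anderson--Duffin identity is valid for singular positive operators and the parallel sum automatically vanishes wherever either factor does. The delicate point is rather the identity and constant blocks: in the singular case ${\mathbb J}_f(\phi^*(X),\phi^*(Y))$ carries these terms compressed by the range projections of $L_{\phi^*(Y)}$ and $R_{\phi^*(X)}$, so one must check, for instance, $\phi^*R_X\phi\leq R_{\phi^*(X)}L_{\phi^*(Y)}^+L_{\phi^*(Y)}$ rather than merely $\phi^*R_X\phi\leq R_{\phi^*(X)}$. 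This is precisely where the kernel inclusions $\ker(L_{\phi^*(Y)})\subseteq\ker(\phi)$ and $\ker(R_{\phi^*(X)})\subseteq\ker(\phi)$ from Lemma~\ref{lem:KernelIncl} enter: combined with the commutation of $R_{\phi^*(X)}$ and $L_{\phi^*(Y)}$, they force the cross terms between $\ker(L_{\phi^*(Y)})$ and its orthogonal complement to vanish, so the compressed inequality follows from the uncompressed one.
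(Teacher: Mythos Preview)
Your proposal is correct, and the first two steps---verifying the kernel hypotheses \eqref{specschurB} via Lemma~\ref{lem:KernelIncl} and reducing to the building blocks $1$, $t$, $f_s(t)=t/(t+s)$ via the L\"owner representation---match the paper exactly. Your treatment of the constant and linear blocks is also the same: both you and the paper prove $\phi^*R_X\phi\leq R_{\phi^*(X)}$ and $\phi^*L_Y\phi\leq L_{\phi^*(Y)}$ directly from the Schwarz inequality, and the compression issue you flag (when $\phi^*(X)$ or $\phi^*(Y)$ is singular) is handled, as you say, by the kernel inclusions and the commutation of the range projections with $R_{\phi^*(X)}$, $L_{\phi^*(Y)}$.

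The genuine difference is in the fractional block $f_s$. The paper does \emph{not} use the Anderson--Duffin variational formula. Instead, it applies the Schur-complement equivalence \eqref{specschurC} a second time, specifically to $f_s$, flipping the desired \textit{(b)}-type inequality for $f_s$ to its \textit{(a)}-type dual. Computing the pseudoinverse on the range of $\phi^*$ turns this into the tracial inequality
\[
t\,\Tr[\phi^*(K)\phi^*(X)^{+}\phi^*(K^*)]+\Tr[\phi^*(K^*)\phi^*(Y)^{+}\phi^*(K)]\leq t\,\Tr[KX^{-1}K^*]+\Tr[K^*Y^{-1}K],
\]
which is exactly two instances of \eqref{tracialineqB}, and the paper then invokes Theorem~\ref{main}. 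Your parallel-sum argument is more self-contained: it reduces the $f_s$ block back to the $R$ and $L$ blocks already proved, and never touches Theorem~\ref{main}. This shows that the extension of Hiai--Petz to non-unital Schwarz maps really only needs the Schwarz inequality and Lemma~\ref{lem:KernelIncl}. On the other hand, the paper's route is deliberate: Section~4 is written to exhibit the tracial inequality \eqref{tracialineqB} and its characterization in Theorem~\ref{main} as the mechanism behind Hiai--Petz monotonicity, so routing the $f_s$ block through Theorem~\ref{main} is the expository point.
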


It should be noted that the condition of $\phi$ satisfying a Schwarz inequality in the previous theorem cannot be relaxed further in the same generality. Indeed, Theorem \ref{thm:IdentityMon} together with Lemma \ref{lem:KernelIncl} shows that for $f=\ident$ either of the inequalities in \eqref{specschurC} is equivalent to $\phi$ satisfying the Schwarz inequality \eqref{LRC2b}. This has also been observed in \cite{CZ22} and pointed out by the anonymous referee.

 \begin{proof}[Proof of Theorem~\ref{HPext}]  Since $X,Y > 0$, $\ker({\mathbb J}_f(X,Y)) = 0$. Evidently, 
 $$\ker({\mathbb J}_f(\phi^*(X),\phi^*(Y)) = \ker(R_{\phi^*(X)}) + \ker(L_{\phi^*(Y)})$$
 and then by Lemma~\ref{lem:KernelIncl} and $X,Y>0$,  $\ker({\mathbb J}_f(\phi^*(X),\phi^*(Y)) \subseteq \ker(\phi)$.  Therefore, \eqref{specschurB} is satisfied, and then
 \eqref{specschurC} is satisfied so that   {\it (a)} and {\it (b)} are equivalent, it suffices to prove either. Using the L\"owner theorem \cite{Lo34,S19} giving an integral representation of all operator monotone functions, Hiai and Petz show that it suffices to do this for  the special case
 \begin{equation}\label{affin}
f(x) :=  \beta + \gamma x +    \frac{x}{t + x}  
 \end{equation}
 with $\beta,\gamma,t \geq 0$.   To prove {\it (b)} for this choice of $f$ it suffices to prove
 \begin{equation}\label{ph7}
\phi^* L_Y \phi \leq L_{\phi^*(Y)}\ ,\quad   \phi^* R_X \phi \leq R_{\phi^*(X)}
\end{equation}
and
\begin{equation}\label{ph8}
\phi^*   \frac{R_X}{t + R_XL_{Y^{+}}} \phi  \leq  \frac{R_{\phi^*(X)}}{t + R_{\phi^*(X)}L_{\phi^*(Y)^{+}}}\ .
\end{equation}
For any $K\in \M_{{n}}$, using the Schwarz inequality \eqref{LRC2b}, we have 
\begin{align*}\langle K, \phi^* L_Y \phi K\rangle &= \Tr[\phi(K)^* Y \phi(K)] \\
 &\leq \Tr[\phi(KK^*)Y] = \Tr[KK^*\phi^*(Y)] = \langle K, L_{\phi^*(Y)} K\rangle\ ,
 \end{align*}
 and this proves the first inequality in \eqref{ph7}. The proof of the second is entirely analogous. To prove \eqref{ph8}, note that by the equivalence of the inequalities in {\it (a)} and {\it (b)}, it suffices to show that,
 \begin{equation}\label{key1}
 \phi \left(\frac{R_{\phi^*(X)}}{t + R_{\phi^*(X)}L_{\phi^*(Y)}^{+}}  \right)^{+}\phi^*  \leq \left(   \frac{R_X}{t + R_XL_Y^{-1}} \right)^{-1}\ .
 \end{equation}
 For a positive semidefinite operator, taking the generalized inverse amounts to inverting the strictly positive eigenvalues, and leaving the zero eigenvalues alone. 
 
By Lemma~\ref{lem:KernelIncl},  
$${\rm ran}(\phi^*) \subseteq  {\rm ran}(R_{\phi^*(X)}) \cap {\rm ran}(L_{\phi^*(Y)})\ ,$$ and on this space, all eigenvalues of both operators are strictly positive. 
Let $E$ be a common eigenvector of both operators in the range of $\phi^*$ with
 $$
 R_{\phi^*(X)}E = \lambda E \qquad{\rm and}\qquad   L_{\phi^*(Y)}E = \mu E\ .
 $$
 Then $\lambda,\mu>0$, and 
 $$
  \left(\frac{R_{\phi^*(X)}}{t + R_{\phi^*(X)}L_{\phi^*(Y)}^{+}}  \right)^{+}E = \left(\frac{\lambda}{t+ \lambda/\mu}\right)^{-1}E = (t R_{\phi^*(X)}^+ +  L_{\phi^*(Y)}^+)E\ .
  $$
 Therefore, \eqref{key1} is equivalent to
 \begin{equation}\label{key2 }
 \phi (t R_{\phi^*(X)}^+ +  L_{\phi^*(Y)}^+)\phi^*  \leq tR_X^{-1} + R_Y^{-1}\ ,
 \end{equation}
 and this is equivalent to
 \begin{align}\label{equ:finIneq}
 &t \Tr[ \phi^*(K) \phi^*(X)^{+} \phi^*(K^*) ] +  \Tr[ \phi^*(K^*) \phi^*(Y)^{+} \phi^*(K)] \nonumber\\
  &\quad\quad\quad\leq t \Tr[KX^{-1}K^*] + \Tr[K^*Y^{-1}K]\ ,
 \end{align}
 for all $K\in \M_{{m}}$. By Theorem~\ref{main} we have both
 \[
 \Tr[ \phi^*(K) \phi^*(X)^{-1} \phi^*(K^*) ]   \leq  \Tr[KX^{-1}K^*]  
 \]
 and
 \[
\Tr[ \phi^*(K^*) \phi^*(Y)^{-1} \phi^*(K)] \leq  \Tr[K^*Y^{-1}K]\ ,
 \]
 and \eqref{equ:finIneq} follows. 
 \end{proof}

 In the case of $f(x) = x^r$, $0 < r < 1$,  the resulting inequalities are 
 \begin{equation}\label{L1}
 \Tr[\phi(K)^* Y^{1-r} \phi(K) X^{r}] \leq \Tr[ K^* \phi^*(Y)^{1-r} K \phi^*(X)^{r}]\ ,
 \end{equation}
 and  
\begin{equation}\label{L2}
 \Tr[\phi^*(K)^* (\phi^*(Y)^+)^{1-r} \phi^*(K) (\phi^*(X)^+)^{r}] \leq \Tr[ K^* Y^{r-1} K X^{-r}]\ .
 \end{equation}
 valid for all maps $\phi$ satisfying the Schwarz inequality, all $X,Y > 0$ in $\M_m$, and all $K\in  \M_{{n}}$. Note that there is no assumption that $\phi$ is unital. 
 These inequalities are the monotonicity versions of Theorems 1 and 2 of \cite{L73}, the Lieb Concavity Theorem and the Lieb Convexity Theorem.   
 The inequality \eqref{L1} was already proved at this level of generality, assuming only that $\phi$ satisfies the Schwarz inequality,  in 1977 by Uhlmann 
 \cite[Proposition 17]{Uh77}.  Petz \cite{P86} gave a proof of  {\it (b)} of Theorem~\ref{HPext}. His proof used ideas of Araki who proved Lieb's inequalities in a general von Neumann algebra setting. In his paper \cite{A75} he explained how these von Neumann algebra methods could be applied in the simpler setting of matrix algebras, and Petz was among the first to explore the path that Araki had opened. 
 
 The version of inequality \eqref{L2} for general monotone $f$ was first explicitly proved by Petz \cite{P96} under the 
 assumption that $\phi$ is $2$-positive, though when $\phi$ is completely positive and unital it follows from the Lieb Convexity 
 Theorem in the same way that the Data Processing Inequality follows from the Lieb Concavity Theorem; see \cite[Section 3]{C22}.   In fact, Petz's approach yielded somewhat more restricted results -- $X$ and $Y$ had not only to be positive, but to have unit trace.  This superfluous condition was removed by Kumagai \cite{K11}.
 
The results of this paper show that  the wide variety of monotonicity theorems investigated by Hiai and Petz \cite{HP12}, exemplified by \eqref{L1} and \eqref{L2},  are valid under the sole assumption that the map $\phi$ satisfies the  Schwarz inequality.  This is the widest possible condition on $\phi$  for which such a result holds.

\appendix

\section{Generalized Schwarz maps from tensor products}\label{app}

The following theorem gives many examples of generalized Schwarz maps that are not $2$-positive including new examples of unital Schwarz maps. Its proof is inspired by related Schwarz-type inequalities obtained in~\cite{bhat00,math04} by Bhatia and Davis, and Mathias, and by a joke in \cite{W12} to call unital Schwarz maps $3/2$-positive. 

\begin{thm}\label{thm:ExamplesSchwarz}
Let $\phi:\M_n\ra\M_m$ be $(k+1)$-positive for some $k\in \N$. Then, $\ident_k\otimes \phi$ is a generalized Schwarz map.
\end{thm}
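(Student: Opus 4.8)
The plan is to verify the generalized Schwarz condition for $\Phi := \ident_k\otimes\phi:\M_k\otimes\M_n\ra\M_k\otimes\M_m$ directly, via the Schur-complement reformulation \eqref{LRC2c}. Since $\phi$ is in particular positive, $\Phi$ is a positive map, and since $\Phi(\one_{kn}) = \one_k\otimes\phi(\one_n)$ we have $\Phi(\one_{kn})^+ = \one_k\otimes\phi(\one_n)^+$. Thus by \eqref{LRC2c} it suffices to establish
\[
\Phi(K^*K) \geq \Phi(K)^*\,\big(\one_k\otimes\phi(\one_n)^+\big)\,\Phi(K)
\]
for every $K\in\M_k\otimes\M_n$. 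Writing $K=(K_{ij})_{i,j=1}^k$ with blocks $K_{ij}\in\M_n$, and using $(K^*K)_{ij}=\sum_l K_{li}^*K_{lj}$, the $(i,j)$ block of the left-hand side is $\sum_{l}\phi(K_{li}^*K_{lj})$, while that of the right-hand side is $\sum_l \phi(K_{li}^*)\,\phi(\one_n)^+\,\phi(K_{lj})$.

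The crucial step, and the one where $(k+1)$-positivity replaces the naive $2k$-positivity that the $2\times2$-block structure superficially suggests, is to treat the block-\emph{rows} of $K$ one at a time. For each $l\in\{1,\dots,k\}$ I would form the column $w_l=(\one_n,K_{l1}^*,\dots,K_{lk}^*)^{T}$ and the rank-one dyad $G_l:=w_lw_l^*\in\M_{k+1}\otimes\M_n$, which is positive semidefinite with corner block $\one_n$, off-corner blocks $K_{lj}$ and $K_{li}^*$, and lower block $\big(K_{li}^*K_{lj}\big)_{i,j=1}^k$. Because $\phi$ is $(k+1)$-positive, $\ident_{k+1}\otimes\phi$ is positive, so $(\ident_{k+1}\otimes\phi)(G_l)\geq 0$; this $(k+1)\times(k+1)$ block matrix has corner $\phi(\one_n)$, off-corner entries $\phi(K_{lj})$ and $\phi(K_{li}^*)$, and lower block $\big(\phi(K_{li}^*K_{lj})\big)_{i,j=1}^k$. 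The essential gain is that each row of $K$ introduces only a single extra coordinate, so only $k+1$ dimensions are ever in play.

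Applying Lemma~\ref{lem:Schur} to the positive block matrix $(\ident_{k+1}\otimes\phi)(G_l)$, with the corner $\phi(\one_n)$ in the role of $X$, yields the $k\times k$ block inequality
\[
\big(\phi(K_{li}^*K_{lj})\big)_{i,j=1}^{k}\ \geq\ \big(\phi(K_{li}^*)\,\phi(\one_n)^+\,\phi(K_{lj})\big)_{i,j=1}^{k},
\]
the kernel containment needed for the generalized-inverse Schur complement being supplied for free by the positivity of $(\ident_{k+1}\otimes\phi)(G_l)$. Summing these $k$ inequalities over $l$ and matching blocks with the computation of the first paragraph produces exactly $\Phi(K^*K)\geq\Phi(K)^*(\one_k\otimes\phi(\one_n)^+)\Phi(K)$, which is \eqref{LRC2c} for $\Phi$; hence $\ident_k\otimes\phi$ is a generalized Schwarz map.

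The conceptual obstacle is the opening move itself: recognizing that the block Schwarz structure for $\ident_k\otimes\phi$ can be reassembled \emph{additively} from $k$ separate rank-one constraints $G_l$, each costing only one extra dimension and therefore requiring only $(k+1)$-positivity, rather than being handled as a single $2k$-dimensional object. Once the row decomposition and the dyads $G_l$ are in place, the remaining steps—the per-row Schur complement and the summation over $l$, together with the bookkeeping identity $(K^*K)_{ij}=\sum_l K_{li}^*K_{lj}$—are routine.
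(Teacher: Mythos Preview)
Your proof is correct and uses essentially the same idea as the paper: the paper also decomposes along the rows of $K$, writing the $2k\times 2k$ block matrix from Definition~\ref{GSdef} as a sum over $l$ of matrices that, after deleting a zero row and column, are exactly $(\ident_{k+1}\otimes\phi)(G_l)$ for your dyads $G_l=w_lw_l^*$. The only cosmetic difference is that the paper verifies the $2\times 2$ block condition of Definition~\ref{GSdef} directly via this additive decomposition, whereas you first pass to the equivalent Schur-complement form \eqref{LRC2c} and then extract each per-row inequality from $(\ident_{k+1}\otimes\phi)(G_l)\geq 0$ via Lemma~\ref{lem:Schur}; the underlying row-by-row insight is identical.
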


\begin{proof}
For simplicity, we state the proof in the case $k=2$. The general case works in the same way. We have to show that 
\[
\begin{pmatrix} (\ident_2\otimes \phi)(\one_{2n}) & (\ident_2\otimes \phi)(X) \\ (\ident_2\otimes \phi)(X)^* & (\ident_2\otimes \phi)\lb X^*X\rb\end{pmatrix} \geq 0
\]
for all $X\in \M_{2n}$. Writing 
\[
X=\begin{pmatrix} A & B \\ C & D\end{pmatrix} ,
\]
for $A,B,C,D\in \M_n$, the previous inequality is equivalent to
\[
\begin{pmatrix} \phi(\one_{n}) & 0 & \phi(A) & \phi(B) \\ 0 & \phi(\one_{n}) & \phi(C) & \phi(D) \\ \phi(A)^* & \phi(C)^* & \phi(A^*A + C^*C) & \phi(A^*B + C^*D) \\ \phi(B)^* & \phi(D)^* & \phi(B^*A+D^*C) & \phi(B^*B+D^*D) \end{pmatrix} \geq 0 .
\]
Now, observe that 
\begin{align*}
&\begin{pmatrix} \phi(\one_{n}) & 0 & \phi(A) & \phi(B) \\ 0 & \phi(\one_{n}) & \phi(C) & \phi(D) \\ \phi(A)^* & \phi(C)^* & \phi(A^*A + C^*C) & \phi(A^*B + C^*D) \\ \phi(B)^* & \phi(D)^* & \phi(B^*A+D^*C) & \phi(B^*B+D^*D) \end{pmatrix} \\
&\quad\quad\quad= \begin{pmatrix} \phi(\one_{n}) & 0 & \phi(A) & \phi(B) \\ 0 & 0 & 0 & 0 \\ \phi(A)^* & 0 & \phi(A^*A) & \phi(A^*B) \\ \phi(B)^* & 0 & \phi(B^*A) & \phi(B^*B)\end{pmatrix} + \begin{pmatrix} 0 & 0 & 0 & 0 \\ 0 & \phi(\one_{n}) & \phi(C) & \phi(D) \\ 0 & \phi(C)^* & \phi(C^*C) & \phi(C^*D) \\ 0 & \phi(D)^* & \phi(D^*C) & \phi(D^*D)\end{pmatrix}.
\end{align*}
Since $\phi$ is $3$-positive, these two summands are positive semidefinite and the proof is finished.
\end{proof}

By applying the previous theorem to a $(k+1)$-positive map $\phi:\M_n\ra \M_m$ that is not $(k+2)$-positive for some $k<\min(n,m)-1$ it is easy to construct examples of generalized Schwarz maps that are not $2$-positive. For example, consider the $3$-positive map $\phi:\M_4\ra\M_4$ given by 
\[
\phi(X) = 3\Tr\lbr X\rbr \one_4 - X ,
\]
which was introduced by Choi~\cite{choi72} and which is not $4$-positive. Theorem \ref{thm:ExamplesSchwarz} shows that the map $\ident_2\otimes \phi:\M_8\ra \M_8$ is a generalized Schwarz map (even a multiple of a unital Schwarz map) that is not $2$-positive. Moreover, by a result from Piani and Mora~\cite[p.~9]{pia07}, the generalized Schwarz map $\ident_2\otimes \phi$ is not decomposable, i.e., it is not a sum of a completely positive and the composition of a completely positive maps and a transpose (cf.~\cite{sto82}). To our knowledge such an example did not appear in the literature before.

\end{document}